\newtheorem{theorem}{Theorem}[section]
\newtheorem{lemma}[theorem]{Lemma}
\newtheorem{corollary}[theorem]{Corollary}
\newtheorem{proposition}[theorem]{Proposition}
\newtheorem{remark}[theorem]{Remark}
\newtheorem{assumption}[theorem]{Assumption}
\newcommand{\bs}[1]{\boldsymbol{#1}}
\def\N{\mathbb{N}}
\def\Q{\mathbb{Q}}
\def\R{\mathbb{R}}
\definecolor{dgreen}{rgb}{0,0.6,0}
\definecolor{lgray}{rgb}{0.7,0.7,0.7}
\def\oL{\bs{l}}
\def\cK{\bs{k}}
\def\cKT{\bs{k}^T}
\def\ctKT{\tilde{\bs{k}}^T}
\def\cv{\bs{c}^{T}}
\newcommand{\cs}[1]{c_{#1}}
\def\cssym{\mathfrak{c}}
\def\clsym{\mathfrak{C}}
\newcommand{\lamBar}{\bar{\lambda}}
\newcommand{\lamAa}{\lambda_{1,k}^{a}}
\newcommand{\lamBa}{\lambda_{2,k}^{a}}
\newcommand{\lambAa}{\bar{\lambda}_{1,k}^{a}}
\newcommand{\lambBa}{\bar{\lambda}_{2,k}^{a}}
\begin{document}
\begin{frontmatter}
  \title{Finite-dimensional output stabilization for a class of linear
      distributed parameter systems --- a small-gain
    approach\tnoteref{t1}}
  \tnotetext[t1]{Partially funded by the Deutsche
    Forschungsgemeinschaft (DFG, German Research Foundation) –
    Project-IDs 274853298 (Grüne) and 274852737 (Meurer).
    We thank Alexander Schaum for stimulating discussions during
    L.~Gr\"une's visit to Kiel in March 2020 and Andrii Mironchenko
    for pointing us to the literature on small-gain approaches for
    infinite-dimensional systems.}
  \author[lg]{L. Gr\"une\corref{cor1}}
  \ead{lars.gruene@uni-bayreuth.de}
  \cortext[cor1]{Corresponding author}
  \address[lg]{Chair of Applied Mathematics, Mathematical Institute, University of Bayreuth, 95440 Bayreuth, Germany}
  \author[tm]{T. Meurer}
  \ead{tm@tf.uni-kiel.de}
  \address[tm]{Chair of Automation and Control,
    Kiel University, Kaiserstrasse 2, 24143 Kiel,
    Germany}
  \begin{abstract}
    A small-gain approach is proposed to analyze closed-loop
      stability of linear diffusion-reaction systems under
      finite-dimensional observer-based state feedback control. For
      this, the decomposition of the infinite-dimensional system
      into a finite-dimensional slow subsystem used for design
      and an infinite-dimensional residual fast subsystem is considered. The
      effect of observer spillover in terms of a particular (dynamic)
      interconnection of the subsystems is thoroughly analyzed for
      in-domain and boundary control as well as sensing. This leads to
      the application of a small-gain theorem for interconnected
      systems based on input-to-output stability and unbounded
      observability properties. Moreover, an approach is presented for
      the computation of the required dimension of the slow subsystem
      used for controller design. Simulation scenarios for both
      scalar and coupled linear diffusion-reaction systems are used to
      underline the theoretical assessment and to give insight into
      the resulting properties of the interconnected systems.
  \end{abstract}
  \begin{keyword}
    Output stabilization, small-gain theory, diffusion-reaction systems,
    spillover, observer design,
    input-to-output stability, distributed parameter systems, partial
    differential equations, modal approximation.
\end{keyword}
\end{frontmatter}
\section{Introduction}
Spillover is an inherent performance and stability issue when
addressing the control of distributed parameter systems based on
finite-dimensional approximations. The term spillover was
characterized in, e.g., \cite{balas:78c,balas:83,guelich:84} and refers to
deterioration of the control performance due to the
infinite-dimensional residual dynamics that is neglected during
control design when taking into account approximation schemes such as
modal, Galerkin or weighted residuals methods
\cite{balas:86,curtain:82,curtain:85,curtain:86}. In particular the
so-called observation spillover might be a source of instability
of the closed-loop control system. Observation spillover can arise
when applying the combination of state feedback controller and state observer -- both
designed based on the finite-dimensional approximation -- to the
original distributed parameter system due to the additional feedback
loop generated by the injection of the contribution of the residual
dynamics to the system output into the observer. 

Finite-dimensional compensator design for distributed parameter
systems (DPSs) has a long history with contributions from different
authors, e.g.,
\cite{gilles:73,balas:78b,balas:83,curtain:82,curtain:86,curtain_robust:86,schuhmacher:83}. The
  particular combined controller and observer structure used in this paper
  seems to be used first in \cite{balas:79} and later in, e.g., 
  \cite{sakawa:FeedbackStabilizationLinear:1983}. Explicit 
formulas to determine the effect of a finite-dimensional modal controller on the
original infinite-dimensional system are derived in \cite{curtain:85}. For this degenerate operator
perturbations are studied but without providing a criterion concerning
the order of the (modal) subsystem to design the
finite-dimensional compensator and observer. Related results are
provided in \cite{curtain:ControllerDesignDistributed:1986} based on
Hankel-norm approximation. The connection between spill\-over and
robustness is analyzed in 
\cite{bontsema:NoteSpilloverRobustness:1988}. A Lyapunov-based
stability analysis of the closed-loop control system with
finite-dimensional modal controller is presented in, e.g.,
\cite{hagen:FinitedimensionalDecentralizedControl:2001}.  
To reduce spillover effect when controlling distributed parameter
systems different measured have been suggested. These include residual mode filters
\cite{balas:FinitedimensionalControllersLinear:1988}, augmented observers
\cite{chait:ControlDistributedParameter:1987} or so-called
cascaded output observers
\cite{harkort:FinitedimensionalObserverbasedControl:2011}. The
  latter reference also considers the a priori determination of the order of the
  stabilizing compensator for systems with bounded input and output
  operator while in general
the necessary order of the reduced system is not specified explicitly
but should be chosen sufficiently large without providing a
computational criteria. 

Feedback stabilization based on reduced order models for large scale
(converged) approximations of linear and nonlinear distributed
parameter systems are suggested in, e.g.,
\cite{shvartsman:NonlinearModelReduction:1998,cao:NumericalCriterionStabilization:2001}. In
\cite{cao:NumericalCriterionStabilization:2001} numerical tools are
used to determine a lower bound on the order of the reduced system so
that the stabilization of a steady state is ensured. Lyapunov theory
and modal decomposition are applied, e.g., in \cite{coron_trelat:04} for a
semilinear 1D heat equation or in
\cite{prieur:FeedbackStabilization1D:2019} for a 1D linear heat with
input delay. Herein a separation between the
finite-dimensional and the infinite-dimensional residual dynamics is
considered for the stability analysis by assuming direct availability
of the modal states without amending the control loop by an
observer. These results are extended in
\cite{katz:ConstructiveMethodFinitedimensional:2020} by developing a
finite-dimensional observer-based control for a 1D heat equation which
relies on Lyapunov's stability theory and linear matrix inequalities
to formulate conditions for the determination of the dimension of the
reduced order system. Delayed input and output are addressed in
\cite{katz:DelayedFinitedimensionalObserverbased:2021} for a scalar
diffusion-reaction equation. Related results are proposed in
\cite{lhachemi:FinitedimensionalObserverbasedPI:2020} for observer-based
PI-control and in \cite{lhachemi:LocalOutputFeedback:2021} taking into
account saturated control.

Differing from previous work this contribution makes use of a
small-gain theorem to assess closed-loop stability of the
interconnection between a finite-dimensional state
feedback control with observer using modal approximation and the
infinite-dimensional residual system. This enables us to verify that
the stabilization of a suitable low-order subsystem ensures stability
of the infinite-dimensional system under this feedback control and to
compute a lower bound on the order of this subsystem. Here, the classical
decomposition into slow and fast dynamics is exploited and an
observer-based state feedback control is designed for the slow
subsystem. Observer spillover arises as the sensor signal contains
information of both slow and fast dynamics, which induces additional
feedback loops that are not considered during the design. Based on
the eigenvalue distribution of the system operator and certain
characteristic features of the input and output operators a sequence
of estimates for the fast (residual) dynamics under observer-based
state feedback control is determined addressing both in-domain and
boundary actuation and sensing. The preliminary results lead to the
application of a small-gain theorem for interconnected systems based
on input-to-output stability and unbounded observability
properties. To address the dimension of the slow subsystem used for
controller design, a numerical approach is presented and illustrated
in simulation scenarios for both scalar and coupled linear
diffusion-reaction systems.

The paper is organized as follows. A prototype example is
introduced in Section \ref{sec:prototype} to motivate the
formulation of an abstract model in Section
\ref{sec:abstract_model} and the decomposition into slow and fast
dynamics as well as observer-based state feedback control
design. Based on this, auxiliary estimates and results are provided
in Section \ref{sec:assumptions} to prepare the main small-gain
result in Section \ref{sec:main} to confirm closed-loop
stability. Section \ref{sec:comp_n} summarizes a computational
approach to determine the minimal order of the slow subsystem used
for control design. Simulation results for scalar and coupled linear
diffusion-reaction systems in Section \ref{sec:sim} are presented to
confirm the theoretical assessment. Some final remarks conclude the
paper.
\subsection*{Notation}
Given vectors
  $\bs{x}_s(t)=[x_1(t),\ldots,x_n(t)]^{T}\in\mathbb{R}^{n}$,
  $\bs{e}_s(t)=[e_1(t),\ldots,e_n(t)]^{T}\in\mathbb{R}^{n}$, $\bs{x}_f(t)=[x_{n+1}(t),x_{n+2}(t),\ldots]^{T}\in\mathbb{R}^{\infty}$ we use the following norms:
\[ \|\bs{x}_s\|_2 := \sqrt{\sum_{k=1}^n x_k^2}, \quad \|\bs{e}_s\|_2 := \sqrt{\sum_{k=1}^n e_k^2}, \]
\[  \|\bs{x}_s\|_{2,\infty} := \sup_{t\ge 0} \|\bs{x}_s(t)\|_2 \quad \|\bs{e}_s\|_{2,\infty} := \sup_{t\ge 0} \|\bs{e}_s(t)\|_2,\]
\[  \|(\bs{x}_s)\vert_{[t_1,t_2]}\|_{p,\infty} := \sup_{\tau\in [t_1,t_2]}
    \|\bs{x}_s(\tau)\|_p,\quad p\in\{1,2\},
\]
\[ \|\bs{x}_f\|_1 := \sum_{k\ge n+1} |x_k|, \quad \|\bs{x}_f\|_2 := \sqrt{\sum_{k\ge n+1} x_k^2}, \]
\[  \|\bs{x}_f\|_{p,\infty} := \sup_{t\ge 0} \|\bs{x}_f(t)\|_{p},
    \quad p\in\{1,2\}.
\]
Moreover, for matrices and linear operators we use the usual induced
operator norms. 
\section{A prototype system}\label{sec:prototype}
We motivate the study in this paper by considering the (unstable) linear diffusion--reaction system
\begin{subequations}
  \label{eq:dr:sys}
  \begin{align}
    &\partial_t x = \partial_z^2 x + r x + b u_1,&&z\in(0,1),~t>0\\
    &\partial_z x\vert_{z=0}=0,\quad x\vert_{z=1} = u_2,&&t>0\\
    &x\vert_{t=0}=x_0,&&z\in[0,1]. 
  \end{align}
\end{subequations}
Let $X=L^2(0,1)$ denote the state space and introduce the self-adjoint
operator $A x = \partial_z^2x$ with domain $D(A) = \{x\in
H^2(0,1)\vert\,\partial_z x(0)=x(1)=0\}$. 
The eigenproblem for $A$ reads $A\phi=\mu\phi$, $\phi\in D(A)$.  
Its solution can be obtained by directly solving the differential
equation and taking into account the boundary conditions. In
particular it follows that $\phi_k = \sqrt{2} \cos(\omega_k z)$,
$\mu_k = -(\omega_k)^2$, $k\in\mathbb{N}$ for $\omega_k=\frac{2k-1}{2}\pi$. 
The sequence $(\phi_k)_{k\in\mathbb{N}}$ form an orthonormal Riesz basis for
$L^2(0,1)$. Taking into account either operator extensions
\cite[Section 13.7]{TucW09}, Green's theorem or modal transformation the system
\eqref{eq:dr:sys} can be projected onto the basis
$(\phi_k)_{k\in\mathbb{N}}$ even taking into account the inhomogeneous
boundary condition at $z=1$. Let $\langle\cdot,\cdot\rangle$ denote
the inner product in $L^2(0,1)$, then
\begin{align*}
  \langle \partial_t x,\phi_k\rangle &= \langle
                                       \partial_z^2 x,\phi_k\rangle + r
                                       \langle x,\phi_k\rangle
                                       +\langle b,\phi_k\rangle u_1.
\end{align*}
Interchanging time differentiation and integration and using
integration by parts taking into account the boundary conditions provides
\begin{align*}
  &\partial_t\langle x,\phi_k\rangle\\
  &=
    -\partial_z\phi_k\vert_{z=1}u_2 + 
    \langle
    x,\partial_z^2\phi_k\rangle + r
    \langle x,\phi_k\rangle +\langle b,\phi_k\rangle u_1 \\
  & = -\partial_z\phi_k\vert_{z=1} u_2 + 
    \mu_k \langle
    x,\phi_k\rangle + r
    \langle x,\phi_k\rangle +\langle b,\phi_k\rangle u_1 .
\end{align*}
Denoting $x_k=\langle x,\phi_k\rangle$, $b_ {1,k}=\langle b,\phi_k\rangle$ and $b_{2,k} =
-\partial_z\phi_k\vert_{z=1}$ the latter equation can be
re-written as the infinite-dimen\-sion\-al system of ODEs in diagonal form
\begin{subequations}
  \label{eq:dr:spectral}
  \begin{align}
    \dot{x}_k &= (r+\mu_k)x_k +b_{1,k} u_1 +b_{2,k} u_2,&&k\in\mathbb{N}\\
    x_k(0) &= \langle x_0,\phi_k\rangle = x_k^0. 
  \end{align}
\end{subequations}
Let subsequently $\lambda_k:=r+\mu_k$. In view of the coefficient
$b_{l,k}$ the prototype problem \eqref{eq:dr:spectral} involves both
in-domain $(l=1)$ and boundary control $(l=2)$, respectively.
\section{An abstract model}\label{sec:abstract_model}
In this section we {specify} an abstract model that captures the
properties of the prototype system just discussed. The results in the
paper will be formulated for this abstract model. All necessary
assumptions will be summarised in the next section.
\subsection{Problem setup}
  We consider systems given by the abstract Cauchy problem 
  \begin{subequations}
    \label{eq:dr:abstract}
    \begin{align}
      \dot{\bs{x}} &= A\bs{x} +
                     B\bs{u},\quad t>0,\quad\bs{x}(0)=\bs{x}_0\in D(A)\\
      \bs{y} &= C\bs{x},\quad t\geq 0
    \end{align}
  \end{subequations}
  on the Hilbert space $X$ equipped with the inner product $\langle
  \cdot,\cdot \rangle_X$. The system operator is denoted by $A$ with domain
  is denoted by $D(A)\subset X$.
  \begin{assumption}\label{assump:abstract_system}
    The abstract system fulfills the following assumptions:
    \begin{enumerate}
    \item\label{assump:abstract_system:op} The operator $A$ is a Riesz spectral operator in the
      sense of \cite[Section 2.3]{curtain_zwart:95}, i.e., $A$ has only discrete 
      eigenvalues $\lambda_k$ with $\sup_{k\in\mathbb{N}}{\rm Re}\{ \lambda_k\}<\infty$ and the eigenvectors $\bs{\phi}_k\in D(A)$ and
      the adjoint eigenvectors $\bs{\psi}_k\in D(A^{\ast})$ form orthonormal
      Riesz bases so that $\langle\bs{\phi}_k,\bs{\psi}_l\rangle_X =
      \delta_{k,l}$ with $\delta_{k,l}$ denoting the Kronecker delta.
    \item\label{assump:abstract_system:eig} The eigenvalues
      $\lambda_k$ of $A$ are real-valued with {$\lambda_k\to -\infty$ as $k\to\infty$ and are} arranged so
      that $\lambda_{k+1}\leq \lambda_k$ for all $k$. 
    \item\label{assump:abstract_system:io} The possibly unbounded operators $B$ and $C$ are assumed to
      be admissible control and output operators, respectively, in the
      sense of \cite{ho:83} or \cite[{Chapter 4}]{TucW09} {with finite dimensional control space $U = \R^m$ and output space $Y = R^l$.} %
    \end{enumerate}
  \end{assumption}
  Assumption \ref{assump:abstract_system:op} implies that the state
  $\bs{x}\in X$ can be represented by the Fourier series $\bs{x} =
  \sum_{k=1}^{\infty} x_k\bs{\phi}_k$ and the operator $A:D(A)\to X$ admits the
  decomposition
  \begin{align}
    \label{eq:abstract:op:series}
    A\bs{x} %
              &= \sum_{k=1}^{\infty}
    \lambda_k x_k\bs{\phi}_k\quad\forall \bs{x}\in D(A)\\
    D(A) &= \bigg\{\bs{x}\in X:~\sum_{k=1}^{\infty} (1+\vert
           \lambda_k\vert^2)\vert x_k\vert^2<\infty\bigg\},
  \end{align}
  where $x_k:=\langle \bs{x},\bs{\psi}_k\rangle_X$ 
  represents the $k$th Fourier or modal coefficient \cite{gohberg_krein:69,curtain_zwart:95,young:01,guo:ControlWaveBeam:2019}. The operator $A$
  is also called diagonalizable \cite[{Section 2.6}]{TucW09}. By \ref{assump:abstract_system:op} it also
  follows that the adjoint operator is diagonalizable with eigenvalues
  $\overline{\lambda_k}=\lambda_k$ noting Assumption
  \ref{assump:abstract_system} \ref{assump:abstract_system:eig}. 
  Proceeding as in \cite[{Section 2.10}]{TucW09}, let $X_1$ denote the space $D(A)$ 
  equipped with the norm $\|\bs{x}\|_1=\|(\beta I-A)\bs{x}\|_X$ for
  some $\beta\in\rho(A)\ne\emptyset$. Note that the norms generated for
  different $\beta$ are equivalent in the graph norm so that
  $\|\bs{x}\|_1$ is independent of the particular choice of
  $\beta$. Let $X_{-1}$ denote the dual of $X_1$ with respect to the
  pivot space $X$, i.e. $X_1\subset X \subset X_{-1}$ with continuous
  dense injections. Similar to the Fourier representation of $\bs{x}$ via the sequence $(x_k)_k$, any linear operator $\bs{z}$ can be represented by the sequence $(z_k)_k$ given by $z_k := \bs{z}\phi_k$ for $k\in\N$. With this representation,  
  the space $X_{-1}$ can be identified with the
  space of sequences $z=(z_k)_k$ for which
  \begin{align}
    \label{eq:X1_dual:norm}
    \|z\|_{-1}^2 = \sum_{k=1}^{\infty}\frac{\vert z_k\vert^2}{1+\vert\lambda_k\vert^2}<\infty.
  \end{align}%
  In a similar fashion, the Riesz basis property and $A,\,A^{\ast}$ being diagonalizable
  imply, see, e.g.,
  \cite{rebarber:NecessaryConditionsExact:2000,weiss:EigenvaluesEigenvectorsSemigroup:2011}  that any input
  operator $B\in\mathcal{L}(U,X_{-1})$ can be 
  represented by %
  a sequence in $U$ according to
  \begin{align}\label{eq:Bu}
    B\bs{u} &= \sum_{k=1}^{\infty} \langle \bs{b}_k,\bs{u}\rangle_U
              \bs{\phi}_k,\quad \bs{b}_k=B^{\ast}\bs{\psi}_k .
  \end{align}
  If we define the sequence $v=(\langle \bs{b}_k,\bs{u}\rangle_U)_k$, then
  $\|v\|_{-1}<\infty$. Taking into account \ref{assump:abstract_system:eig} and
  \ref{assump:abstract_system:io} it can be shown, see, e.g.,
  \cite{weiss:EigenvaluesEigenvectorsSemigroup:2011}, that there exists
  $m_b>0$ so that $\|\bs{b}_k\|\leq m_b(1+\vert\lambda_k\vert)$ for
  all $k\in\mathbb{N}$. We also note that the input operator $B$ is called admissible, if
  \eqref{eq:dr:abstract} considered as an abstract Cauchy problem with
  values in $X_{-1}$ has a continuous $X$-valued {mild} solution for any
  $\bs{u}\in L^2([0,\infty);U)$ \cite{ho:83}, \cite[{Definition 4.1.5}]{TucW09}. {Throughout this paper, we consider these mild solutions.}
  {We} refer to \cite{ho:83} and \cite[{Chapter 10}]{TucW09} for the
  formulation of boundary control problems in the form
  \eqref{eq:dr:abstract} with unbounded input operator $B$ using so-called operator extensions. 

  Let $C_j$ denote the $j$-th component of the output operator $C$. Then, using 
  $c_{j,k} = C_j\phi_k$, the identity 
  \begin{align}
    \label{eq:abstract:C_j}
    C_j\bs{x} = \sum_{k=1}^\infty
    c_{j,k} x_k
  \end{align}
  holds provided the infinite sum is absolutely convergent. Due the fact that the admissibility assumption \ref{assump:abstract_system:io} demands that $C_j\in\mathcal{L}(X_1,\mathbb{R})$, by \eqref{eq:X1_dual:norm} this is in particular the case if $\bs{x}\in X_1$.
Theorem 5.3.2 from \cite{TucW09} and the eigenvalue condition in \ref{assump:abstract_system:eig} imply that the admissibility condition in 
\ref{assump:abstract_system:io} for $C$ is equivalent to the existence of a constant 
$m_c\geq 0$ such that
  \begin{align}
    \label{eq:abstract:c_jk:condition}
    \frac{1}{h}\sum_{\lambda_k \ge -h} \vert c_{j,k}\vert^2\leq m_c,
  \end{align}
  for all $h>0$. We note that in general this is a more demanding condition than \eqref{eq:X1_dual:norm}.
\begin{remark}
  In order to simplify the exposition we restrict ourselves to
  ~\eqref{eq:dr:abstract}
  in the SISO-case, i.e., $B\bs{u}=\bs{b}u$,
  $\bs{y}=C\bs{x}=y$, and explain the necessary changes for the MIMO-case in
  Remark \ref{rem:mimo}.
\end{remark}
\subsection{System decomposition and finite-dimensional observer-based
  control design}
In view of the orthonormality property of the eigenvectors and adjoint
eigenvectors the orthogonal projections $P_s\bs{x} =
\sum_{k=1}^{n}x_k\bs{\phi}_k$ and $P_f\bs{x} =
\sum_{k=n+1}^{\infty}x_k\bs{\phi}_k$ so that
$\bs{x}=P_s\bs{x}+P_f\bs{x}$ can be introduced. By Assumption \ref{assump:abstract_system:eig} and
making use of $P_s$ and $P_f$  the system dynamics
\eqref{eq:dr:abstract}, respectively, can be split into a
finite-dimensional slow and an infinite-dimensional fast dynamics.
In particular, if we identify $\bs{x}_s = P_s\bs{x}$ with $[x_1,x_2,\ldots,x_n]^T$ and
$\bs{x}_f=P_f\bs{x}$ with $[x_{n+1},x_{n+2},\ldots]^T$, then
\eqref{eq:dr:spectral} can be written in the form 
\begin{subequations}
  \label{eq:slowfastsys}
  \begin{align}
    \dot{\bs{x}}_s &= \Lambda_s \bs{x}_s + \bs{b}_s u\label{eq:slowsys}\\
    \dot{\bs{x}}_f &= \Lambda_f \bs{x}_f + \bs{b}_f u\label{eq:fastsys}
  \end{align}
\end{subequations}
with $\Lambda_s=\text{diag}\{\lambda_1,\ldots,\lambda_n\}$,
$\Lambda_f=\text{diag}\{\lambda_{n+1},\lambda_{n+2},\ldots\}$,
$\bs{b}_s=[b_1,\ldots,b_n]^T$, and
$\bs{b}_f=[b_{n+1},b_{n+2},\ldots]^T$, where $b_k = \langle \bs{b},\bs{\psi}_k\rangle_X$. 
We assume that $\lambda_{n+1}<0$, which for our model problem can always be achieved
if $n$ is sufficiently large. 
The output of the system can be equivalently split into
\begin{align}
y = Cx = \cv_s \bs{x}_s + C \bs{x}_f,
\end{align}
where $\cv_s=[\cs{1},\ldots,\cs{n}]$ with $c_k = C\phi_k$. If the infinite sum in \eqref{eq:abstract:C_j} is absolutely convergent, then we can write 
\begin{align} \label{eq:fast_output_sum}
C \bs{x}_f = \sum_{k=n+1}^{\infty}C\bs{\phi_k}x_k = \cv_f \bs{x}_f \end{align}
with $\cv_f=[\cs{n+1},\cs{n+1},\ldots]$ with $c_k = C\phi_k$. Whenever we use the representation \eqref{eq:fast_output_sum} in this paper, we will check that the absolute convergence property holds (cf.\ \eqref{eq:Cf_times_x:z1} and Lemma \ref{lem:ios_estimate:z:case:1} as well as  \eqref{eq:Cf_times_x:z} and Lemma \ref{lemma:znormest}).

Consider now the finite-dimensional (slow) system
\begin{subequations}
  \label{eq:findim:slow}
  \begin{align}
  \dot{\bs{x}}_s &= \Lambda_s \bs{x}_s + \bs{b}_s u\\
  {y}_s &= \cv_s \bs{x}_s
  \end{align}
\end{subequations}
Assuming stabilizability and detectability, we can design a stabilizing dynamic output feedback law based on a Luenberger observer. The observer is of the form
\begin{subequations}
  \begin{align}
  \dot{\hat{\bs{x}}}_s &= \Lambda_s \hat{\bs{x}}_s + \bs{b}_s u + \oL({y}_s-\hat{{y}}_s)\label{eq:obs}\\
 \hat{{y}}_s &= \cv_s \hat{\bs{x}}_s
  \end{align}
\end{subequations}
and the resulting control reads $u=-\cKT\hat{\bs{x}}_s$.
  Defining $\bs{e}_s:= \bs{x}_s - \hat{\bs{x}}_s$ we can rewrite this as 
\begin{subequations}
  \begin{align}
  \dot{\bs{e}}_s &= (\Lambda_s-\oL \cv_s) \bs{e}_s\label{eq:err}\\
  u &= -\cKT(\bs{x}_s - \bs{e}_s)
  \end{align}
\end{subequations}
If we entirely neglect the fast, infinite-dimensional subsystem \eqref{eq:fastsys}, we end up with the finite-dimensional closed-loop system
\begin{subequations}
  \label{eq:finsys:all}
  \begin{align}
  \dot{\bs{x}}_s &= \Lambda_s \bs{x}_s - \bs{b}_s \cKT(\bs{x}_s - \bs{e}_s)\label{eq:finsys1}\\
  \dot{\bs{e}}_s &= (\Lambda_s-\oL \cv_s) \bs{e}_s\label{eq:finsys2}
  \end{align}
\end{subequations}
which is asymptotically stable if $\cK$ and $\oL$ are appropriately chosen.
We note that while $\cK$ and $\oL$ depend on $n$, we impose in Assumption \ref{assump:general}, below, that their
norm is bounded independent of $n$. This assumption is satisfied,
  e.g., if we only shift a finite number of eigenvalues with these feedback
  laws. We illustrate this for $\cK$. Suppose we want to design
  ${\cK}$ such that $\Lambda_s-\bs{b}_s\cK^T$ has the eigenvalues
  $\kappa_1,\kappa_2,\ldots,\kappa_j,\lambda_{j+1},\lambda_{j+2},\ldots$,
  where $\lambda_i$ are the eigenvalues of $\Lambda_s$. Recalling that
  $\Lambda_s$ is a diagonal matrix, we can then write $\Lambda_s$,
  $\bs{b}_s$ and $\cK$ as
  \[
    \Lambda_s = \left(\begin{array}{ll}\Lambda_1 & 0 \\ 0 & \Lambda_2\end{array}\right), \quad 
    \bs{b}_s = \left(\begin{array}{l} \bs{b}_1 \\
                       \bs{b}_2\end{array}\right), \; \mbox{ and } \;
                   \cK = \left(\begin{array}{l} \cK_1 \\
                                 \cK_2\end{array}\right),
                           \]
with $\Lambda_1\in\R^{j\times j}$, $\bs{b}_1,\cK_1\in\R^j$,
$\Lambda_2\in\R^{(n-j)\times (n-j)}$ and
$\bs{b}_2,\cK_2\in\R^{n-j}$. Then we choose $\cK_1$ such that the
matrix $\Lambda_1-\bs{b}_1\cK_1^T$
has the eigenvalues $\kappa_1,\kappa_2,\ldots,\kappa_j$. Clearly, the entries of $\cK_1$ and thus the norm of this vector do not depend on $n$. Setting $\cKT := (\cKT_1,{\bs 0}^T)^T$ with $\bs{0}\in\R^{n-j}$, the norm of $\cK$ is also independent of $n$. This feedback law yields the desired eigenvalues, since  
\[
  \Lambda_s-\bs{b}_s\cKT= \left(\begin{array}{rl}\Lambda_1 - \bs{b}_1\cKT_1 & 0 \\ -\bs{b}_2\cKT_1 & \Lambda_2\end{array}\right)\]
is a block lower triangular matrix whose eigenvalues coincide with that of $\Lambda_1 - \bs{b}_1\cKT_1$ and $\Lambda_2$.

In practice, the output ${y}_s=\cv_s\bs{x}_s$ will not be available
for implementation. Rather, ${y} = C\bs{x} = \cv_s \bs{x}_s + {C\bs{x}_f}$ can be measured. This means that ${y}_s$ in
\eqref{eq:obs} is replaced by ${y}={y}_s + {y}_f$ with $y_f={C\bs{x}_f}$. As a
consequence, \eqref{eq:err} becomes \[ \dot{\bs{e}}_s = (\Lambda_s-\oL \cv_s) \bs{e}_s - \oL {y}_f = (\Lambda_s-\oL \cv_s) \bs{e}_s - \oL {C\bs{x}_f}\]
and the true closed-loop system is described by
\begin{subequations}
  \label{eq:cl:all}
  \begin{align}
    \dot{\bs{x}}_s &= \Lambda_s \bs{x}_s - \bs{b}_s \cKT(\bs{x}_s - \bs{e}_s)\label{eq:cl1}\\
    \dot{\bs{e}}_s &= (\Lambda_s-\oL \cv_s) \bs{e}_s- \oL {C\bs{x}_f}\label{eq:cl2},
                     \intertext{which needs to be completed with}
                     \dot{\bs{x}}_f &= \Lambda_f \bs{x}_f - \bs{b}_f \cKT(\bs{x}_s - \bs{e}_s)\label{eq:cl3},
  \end{align}
\end{subequations}
i.e., with the fast subsystem \eqref{eq:fastsys} for $u=-\cKT(\bs{x}_s
- \bs{e}_s)$. Note that \eqref{eq:cl1} and \eqref{eq:cl2} are
finite-dimensional, while \eqref{eq:cl3} is infinite-dimensional.

If we abbreviate $\tilde{\bs{x}}_s = [\bs{x}_s^T,\bs{e}_s^T]^T$, then
we can write \eqref{eq:cl1}, \eqref{eq:cl2} as
\begin{equation} \dot{\tilde{\bs{x}}}_s = \widetilde A
  \tilde{\bs{x}}_s +
  \tilde{\oL}{C\bs{x}_f}, \label{eq:c1tilde}
\end{equation}
where $\widetilde A$ is a Hurwitz matrix and $\tilde{\oL}
  = (\begin{smallmatrix}0\\-\oL\end{smallmatrix})$. We pick $\tilde\lambda>0$ and
$\widetilde M>0$ such that $\|e^{\widetilde A t}\| \le \widetilde
Me^{-\tilde\lambda t}$ for all $t\ge 0$.
\begin{proposition}
    Let
    \begin{align}
      \label{eq:prop:independency:n}
      \sum_{k=n+1}^\infty \frac{b_k^2}{\lambda_k^2}<\infty,
    \end{align}
    then the bound $\widetilde{M}$ is independent of the dimension $n$ of slow
    subsystem \eqref{eq:slowsys}. %
  \end{proposition}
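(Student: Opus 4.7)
The plan is to exploit the block structure of $\widetilde{A}$ induced by the zero-padded form $\cK=(\cK_1^T,\bs{0}^T)^T$ and, by the analogous observer construction, $\oL=(\oL_1^T,\bs{0}^T)^T$ from the paragraph preceding the proposition. I would split each of $\bs{x}_s$ and $\bs{e}_s$ into shifted components (indices $1,\ldots,j$) and unshifted components (indices $j+1,\ldots,n$), and correspondingly partition $\bs{b}_s=(\bs{b}_1,\bs{b}_2)$, $\cv_s=(\cv_{s,1},\cv_{s,2})$, and $\Lambda_s=\mathrm{diag}(\Lambda_1,\Lambda_2)$. A direct computation shows that after reordering the coordinates as $(\bs{e}_{s,2},\bs{e}_{s,1},\bs{x}_{s,1},\bs{x}_{s,2})$, the matrix $\widetilde{A}$ becomes block lower triangular, so that $\dot{\tilde{\bs{x}}}_s=\widetilde{A}\tilde{\bs{x}}_s$ reads as a cascade: $\bs{e}_{s,2}$ evolves autonomously under $\Lambda_2$; $\bs{e}_{s,1}$ evolves under the Hurwitz matrix $\Lambda_1-\oL_1\cv_{s,1}$ driven by $-\oL_1\cv_{s,2}\bs{e}_{s,2}$; $\bs{x}_{s,1}$ evolves under the Hurwitz matrix $\Lambda_1-\bs{b}_1\cK_1^T$ driven by $\bs{b}_1\cK_1^T\bs{e}_{s,1}$; and $\bs{x}_{s,2}$ evolves autonomously under $\Lambda_2$ driven by $-\bs{b}_2\cK_1^T(\bs{x}_{s,1}-\bs{e}_{s,1})$.

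The next step is to walk down this cascade and produce for each block an exponential bound whose constants are independent of $n$. Every object carrying only a subscript $1$ is of fixed size $j$ and therefore contributes only $n$-independent data, and the diagonal block $\Lambda_2$ has largest eigenvalue $\lambda_{j+1}$, a fixed negative number. The only potentially $n$-dependent quantities are the forcing amplitudes $\cv_{s,2}$ and $\bs{b}_2$, whose entries $c_k$ and $b_k$ for $k\in\{j+1,\ldots,n\}$ may grow with the index. The workhorse is the elementary estimate
\[
  \int_0^t e^{(t-s)\lambda_k}\,e^{-\mu s}\,ds \;\le\; \frac{C\,e^{-\mu t}}{|\lambda_k|},
\]
valid for any $\lambda_k<-2\mu<0$ with a universal constant $C$ (the finitely many remaining indices are handled separately). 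Applied componentwise, it implies that the forcing contribution to $\bs{e}_{s,1}$ has norm bounded by a constant times $\|\bs{e}_{s,2}(0)\|\,e^{-\mu t}\bigl(\sum_{k=j+1}^n c_k^2/\lambda_k^2\bigr)^{1/2}$, and the forcing contribution to $\bs{x}_{s,2}$ has norm bounded by a constant times $e^{-\mu t}\bigl(\sum_{k=j+1}^n b_k^2/\lambda_k^2\bigr)^{1/2}$ multiplied by the (uniformly decaying) scalar $\cK_1^T(\bs{x}_{s,1}-\bs{e}_{s,1})$.

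The $b$-sum is uniformly controlled by the proposition's hypothesis, since $\sum_{k=n+1}^\infty b_k^2/\lambda_k^2<\infty$ is equivalent to convergence of the full series $\sum_{k=1}^\infty b_k^2/\lambda_k^2$, whose tail $\sum_{k=j+1}^\infty$ dominates the partial sum $\sum_{k=j+1}^n$ appearing above. The $c$-sum is uniformly controlled through the Carleson--Weiss admissibility condition \eqref{eq:abstract:c_jk:condition}: a dyadic decomposition across the level sets $\{\lambda_k\in[-2^{i+1},-2^i)\}$ combined with \eqref{eq:abstract:c_jk:condition} yields $\sum_{k=1}^\infty c_k^2/\lambda_k^2<\infty$. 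Combining the four cascade estimates then produces $\|e^{\widetilde{A}t}\|\le \widetilde M e^{-\tilde\lambda t}$ with both $\widetilde M$ and $\tilde\lambda$ independent of $n$.

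I expect the main obstacle to be resisting the temptation of naive norm estimates such as $\|\bs{b}_2\|=(\sum_{k=j+1}^n b_k^2)^{1/2}$ or $\|\cv_{s,2}\|=(\sum_{k=j+1}^n c_k^2)^{1/2}$, which are in general not uniform in $n$. The remedy is always to integrate componentwise and extract the factor $1/|\lambda_k|$ from the Duhamel integral, so that the square-summability of the sequences $(b_k/\lambda_k)$ and $(c_k/\lambda_k)$ can be invoked instead of the square-summability -- which in general fails -- of $(b_k)$ and $(c_k)$ themselves. A secondary technical point is verifying that the implicit constants $C$ and $\mu$ in the key integral bound can be chosen uniformly across $k$ and $n$, which is guaranteed because $\lambda_{j+1}$ is fixed and the finitely many small-$|\lambda_k|$ terms contribute an $n$-independent constant.
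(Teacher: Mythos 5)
Your proposal is correct and follows the same basic route as the paper's proof: exploit the triangular structure created by the zero-padded gains, apply the variation-of-constants formula componentwise to the unshifted modes, extract a factor $1/|\lambda_k|$ from each Duhamel integral, and invoke the square-summability of $(b_k/\lambda_k)$ guaranteed by \eqref{eq:prop:independency:n} rather than that of $(b_k)$ itself. The one substantive difference lies in the treatment of the observer-error block. The paper asserts an autonomous exponential bound $\|\tilde{\bs{x}}_{s,j}(t)\|_2\le \widetilde M_j e^{-\tilde\lambda_j t}\|\tilde{\bs{x}}_{s,j}(0)\|_2$ for the $2j$-dimensional truncation and writes the equations for $k>j$ as driven only by $\tilde{\bs{x}}_{s,j}$; the coupling $-\oL_1\cv_{s,2}\bs{e}_{s,2}$ from the unshifted into the shifted error components, which makes that truncation non-autonomous when the zero-padded $\oL$ is used, is not discussed, and the coefficients $c_k$ never enter the paper's argument. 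Your finer cascade ordering $(\bs{e}_{s,2},\bs{e}_{s,1},\bs{x}_{s,1},\bs{x}_{s,2})$ makes this coupling explicit and controls it via the additional fact $\sum_k c_k^2/\lambda_k^2<\infty$, obtained by a dyadic decomposition from the admissibility condition \eqref{eq:abstract:c_jk:condition}. This buys a self-contained argument for the observer gain actually constructed in the paper, at the modest cost of invoking the output admissibility already assumed in \ref{assump:abstract_system:io}; otherwise the two proofs are the same estimate carried out block by block.
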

We note that \eqref{eq:prop:independency:n} follows from the admissibility of $B$, using that $\|v\|_{-1}<\infty$ for $v$ defined after \eqref{eq:Bu} and that $\lambda_k\to -\infty$ as $k\to\infty$ and $\lambda_k < 0$ for all $k \ge n+1$.

\begin{proof}
    Let $\bs{x}_{s,j} = [x_1,\ldots,x_j]^T$, $\bs{e}_{s,j} = [e_1,\ldots,e_j]^T$ and $\tilde{\bs{x}}_{s,j} = [\bs{x}_{s,j}^T,\bs{e}_{s,j}^T]^T$.
    Then, for each $k>j$ with $j$ from the controller and observer
    construction, we obtain the two equations
    \begin{align*}
      \dot x_{k} &= \lambda_{k} x_{k} + b_{k} \bs{z}_x^T
                   \tilde{\bs{x}}_{s,j}\\
      \dot e_{k} &= \lambda_{k} e_{k} + b_{k} \bs{z}_e^T \tilde{\bs{x}}_{s,j},
    \end{align*}
    where $\bs{z}_x^T$ and $\bs{z}_e^T$ are suitable column
    vectors independent of $n$ determined by $\bs{k}^T$ and $\bs{l}$. By construction and the suitable choice of
      $\kappa_1,\ldots,\kappa_j$ there exist positive constants
      $\tilde\lambda_j,\,\widetilde M_j>0$ with $\tilde\lambda_j +
      \lambda_k<0$ for all $k>j$ so that
      $\|\tilde{\bs{x}}_{s,j}\|_2\leq \widetilde M_j e^{-\tilde\lambda_j
        t}\|\tilde{\bs{x}}_{s,j}(0)\|_2$.
      By the variations of constants formula we obtain
      \[ x_{k}(t)  = e^{\lambda_{k}t} x_{k}(0) + \int_0^t
        e^{\lambda_{k}(t-s)} b_{k} \bs{z}_x^T \tilde{\bs{x}}_{s,j}(s)
        ds.\]
      Using the bound on $\tilde{\bs{x}}_{s,j}$, the Cauchy-Schwarz
      inequality $\vert\bs{z}_x^T \tilde{\bs{x}}_{s,j}(s)\vert\leq
      \|\bs{z}_x\|_2\|\tilde{\bs{x}}_{s,j}(s)\|_2$, and noting
      $e^{\lambda_k t}\leq e^{-\tilde\lambda_j t}$ this implies the
      sequence of estimates
      \begin{align*} |x_{k}(t)| & \le e^{\lambda_{k}t}|x_{k}(0)|\\
                                & \quad + \int_0^t
                                  e^{\lambda_{k}(t-s)} \vert b_{k}\vert \|\bs{z}_x\|_2 \widetilde M_j e^{-\tilde \lambda_j s} \|\tilde{\bs{x}}_{s,j}(0)\|_2 ds\\
                                & = e^{\lambda_{k}t}|x_{k}(0)|\\ &
                                                                   \quad + \vert b_{k}\vert \|\bs{z}_x\|_2\|\tilde{\bs{x}}_{s,j}(0)\|_2 \widetilde M_j e^{\lambda_{k}t}\int_0^t e^{-(\tilde \lambda_j + \lambda_{k}) s} ds\\
                                & \leq e^{\lambda_{k}t}|x_{k}(0)| \\ &
                                                                       \quad
                                                                       +
                                                                       \|\bs{z}_x\|_2\|\tilde{\bs{x}}_{s,j}(0)\|_2\widetilde
                                                                       M_j
                                                                       \vert
                                                                       b_{k}\vert\frac{e^{\lambda_k
                                                                       t}-e^{-\tilde
                                                                       \lambda_jt}}{\lambda_k
                                                                       +
                                                                       \tilde
                                                                       \lambda_j}\\
                                & \leq e^{-\tilde\lambda_{j}t}\bigg(
                                  |x_{k}(0)| +
                                  \|\bs{z}_x\|_2\|\tilde{\bs{x}}_{s,j}(0)\|_2
                                  \widetilde M_j
                                  \frac{\vert b_{k}\vert}{\vert \tilde
                                  \lambda_j +\lambda_{k}\vert} \bigg).
      \end{align*}
      The analogous inequality holds for the components of $\bs{e}_s$
      with $|x_{k}(0)|$ replaced by $|e_{k}(0)|$ and $\|\bs{z}_x\|_2$
      by $\|\bs{z}_e\|_2$. Since $(a+b)^2\le 2a^2+2b^2$ and
      $e^{2\lambda_kt} \le e^{-2\tilde\lambda_jt}$, we obtain for the
      $2$-norm after some intermediate but straightforward
      computations making use of $\|\bs{z}\|^2 = \|\bs{z}_x\|_2^2 +
      \|\bs{z}_e\|_2^2$ that 
      \begin{align*}
        \|\tilde{\bs{x}}_{s,n}\|_2^2 &\leq \|\tilde{\bs{x}}_{s,j}\|_2^2
                                       + \sum_{k=j+1}^{n} \vert x_k\vert^2 + \vert e_k\vert^2\\
        &\leq \widetilde M_j^2 e^{-2\tilde\lambda_j
          t}\|\tilde{\bs{x}}_{s,j}(0)\|_2^2\\
        &\quad + 2
          e^{-2\tilde\lambda_{j}t}\sum_{k=j+1}^{n} |x_{k}(0)|^2 +
          |e_{k}(0)|^2\\
        &\quad + 2
          e^{-2\tilde\lambda_{j}t}\sum_{k=j+1}^{n} \|\bs{z}\|_2^2\|\tilde{\bs{x}}_{s,j}(0)\|_2^2\widetilde M_j^2\bigg(\frac{\vert b_{k}\vert}{\vert \tilde
          \lambda_j +\lambda_{k}\vert}\bigg)^2\\
        & \leq \max\{2,\widetilde M_j^2\} e^{-2\tilde\lambda_j
          t}\|\tilde{\bs{x}}_{s,n}(0)\|_2^2\\
        &\quad +  2\widetilde M_j^2
          e^{-2\tilde\lambda_{j}t}\|\bs{z}\|_2^2\|\tilde{\bs{x}}_{s,n}(0)\|_2^2\sum_{k=j+1}^{n} \bigg(\frac{\vert b_{k}\vert}{\vert \tilde
          \lambda_j +\lambda_{k}\vert}\bigg)^2\\
        &\leq \max\{2,2\widetilde M_j^2\}
          e^{-2\tilde\lambda_jt}\|\tilde{\bs{x}}_{s,n}(0)\|_2^2
          \times\\
        &\quad \Bigg( 1 +  \|\bs{z}\|_2^2\sum_{k=j+1}^{\infty} \bigg(\frac{\vert b_{k}\vert}{\vert \tilde
          \lambda_j +\lambda_{k}\vert}\bigg)^2\Bigg)
      \end{align*}
      Assumption \eqref{eq:prop:independency:n} implies the
      convergence of the sum in the latter term which together with
      $\|\bs{z}\|_2$ being independent of $n$ by construction of
      $\bs{k}^T$ and $\bs{l}$ verifies the claim.
\end{proof}
\section{Assumptions and auxiliary estimates}\label{sec:assumptions}
In this section we formulate the precise assumptions on our abstract
model and provide auxiliary estimates, which we will need for the
proof of the main theorem in the next section. %
\begin{assumption}
  \label{assump:general}
  We impose the following assumptions on
  \eqref{eq:slowfastsys} and
  \eqref{eq:cl:all}.
  \begin{enumerate}
    \setcounter{enumi}{3}
  \item\label{assump:general:a0} The matrices $\Lambda_s-\oL \cv_s$ and
    ${\Lambda_s}-b_s\cK$ are Hurwitz with eigenvalues whose real parts
    are smaller than $\delta$, where $\delta<0$ is independent of $n$,
    and $\|\cK\|$, $\|\oL\|$ and $\widetilde M$ are bounded with bounds independent of
    $n$.
  \item\label{assump:general:a1} The sequence of eigenvalues
    $(\lambda_k)_k$ satisfies $\lambda_{k+1}\leq\lambda_k$ for all
    $k\in\N$, $\lambda_k\to-\infty$ for $k\to\infty$, $\lambda_{n+1}<0$ and 
    \begin{align*}
      \sum_{k=1}^{\infty} \frac{1}{\vert \lambda_k\vert}=M_\lambda<\infty,
    \end{align*}
  \item\label{assump:general:a2} The following assumptions will be used
    alternatively. 
    \begin{enumerate}
    \item\label{assump:general:a2a} There is $\alpha>1$ and $d_1>0$ such
    that $\vert\frac{b_k}{\lambda_k}\vert \le \frac{d_1}{k^\alpha}$
    and there is $\cssym_2>0$ with $|\cs{k}|\le \cssym_2$ for
      all $k\ge n+1$. %
  \item\label{assump:general:a2b} Any of the conditions is fulfilled:\end{enumerate}
    \begin{itemize}
    \item The inequality
      $\sum_{k=n+1}^{\infty}\frac{b_k}{\lambda_k}<\infty$
      holds.
    \item There are constants $\cssym_1,\cssym_2,\cssym_3>0$ with $|b_k|\le \cssym_1k$, $|\cs{k}|\le \cssym_2$, $\lambda_k\le -\cssym_3k^2$ for all $k\ge n$.
    \item For each $m\in\N$ there is $\gamma_{m}>0$ such that $|\lambda_k^{-1}-\lambda_{k+m}^{-1}|\le \gamma_{m}k^{-3}$ for all $k\ge n$.
    \item There exist $\cssym_4,\cssym_5,k_1>0$ and pairwise disjoint sets
      $S_j\subset\N$, $j\in\N$, each with at most $s\in\N$ elements,
      $\N=\bigcup_{j\in\N} S_j$, $\min S_j \ge \cssym_4 j$, $\max S_j \le
      \min S_j+k_1$ and $\vert\sum_{k\in S_j} {\cs{k} b_k}/{\lambda_k} \vert \le {\cssym_5}j^{-2}$.
    \end{itemize}
\end{enumerate}
\end{assumption}

Assumption \ref{assump:general:a0} imposes bounds
on the norms of $\cK$ and $\oL$, which, as discussed after
\eqref{eq:finsys:all}, are satisfied if only finitely
many eigenvalues are shifted. Moreover, Assumption \ref{assump:general:a1}
imposes a restriction on the growth of the eigenvalues. In particular
this condition is fulfilled, if $\lambda_k \sim ck^2$ as is typically the case for
diffusion-reaction problems. 
Assumptions \ref{assump:general:a2a} and \ref{assump:general:a2b} will
be used alternatively. Assumption \ref{assump:general:a2a} in
particular implies that the series $\sum_{k=1}^{\infty}
{b_k}/{\lambda_k}$ converges absolutely and is usually satisfied
for in-domain control. Assumption \ref{assump:general:a2b} can be applied if
$\sum_{k=1}^{\infty} \frac{b_k}{\lambda_k}$ does not converge
absolutely, which happens in case of boundary control.
\begin{remark}\label{rem:assump:prototype}
 Assumption \ref{assump:general:a2b} is satisfied for the prototype system from
  Section \ref{sec:prototype} in the boundary control case. This can be checked using 
  $\lambda_k=r-((2k-1)\pi/s)^2$ defined after (\ref{eq:dr:spectral}b)
  together with the $b_{2,k}={(-1)^{k}(2k-1)\pi}/{\sqrt{2}}$ defined
  before (\ref{eq:dr:spectral}a) and the coefficients $\cs{k}$ obtained from point
  measurements at some point $\xi \in(0,1)\cap \Q$. We can then define
  the $S_j$ to be of the form \[ S_j = \{ mk_1+k, (m+1)k_1-k+1\}\]
  with $m=[2j/k_1]$ and $k=j-mk_1/2$. Here $k_1\in\N$ chosen such that
  $\cs{mk_1+k} = \cs{(m+1)k_1-k+1}$ for all these $m$ and $k$. Such
  a $k_1$ exists in case of a point measurements at some point $\xi
  \in(0,1)\cap \Q$ due to the periodicity of the cosine function and
  $[2j/k_1]$ denotes the integer part of $2j/k_1$. 
For $\xi\in(0,1)\setminus\mathbb{Q}$, the existence of $k_1$ with $c_{mk_1+k} = c_{(m+1)k_1-k+1}$, which is crucial for ensuring the last item of Assumption \ref{assump:general:a2b}, cannot be guaranteed. Numerical tests have, however, revealed that one can still find $k_1$ such that this identity is satisfied approximately with a small error. We thus expect that Assumption \ref{assump:general:a2b} is also satisfied for $\xi\not\in\mathbb{Q}$, although a formal proof of this property is beyond the scope of this paper.
\end{remark}

In the remainder of this section we derive auxiliary estimates for the
solutions of the closed loop system \eqref{eq:cl:all} in
case \ref{assump:general:a1} and \ref{assump:general:a2a} or
\ref{assump:general:a2b} are satisfied. 
\subsection{Estimates under Assumptions \ref{assump:general:a1} and \ref{assump:general:a2} }\label{subsec:A2ab}
\begin{lemma}\label{lemma:xest:case:1}
  Let \ref{assump:general:a1} and \ref{assump:general:a2a} or \ref{assump:general:a2b} be satisfied. Then there exists a constant $\clsym_1>0$ independent of $t$ and $\bs{k}$ such
  that for all $k\ge n+1$ the inequalities
  \begin{align}\label{eq:xkest}
    \begin{split}
      |x_k(t)| &\le e^{\lambda_{k}t} |x_k(0)| + \clsym_1\frac{\|\bs{k}\|_2}{k^\alpha}
      \|(\bs{x}_s-\bs{e}_s)_{[0,t]}\|_{2,\infty}\\
      &\le e^{\lambda_{n+1}t} |x_k(0)| + \clsym_1\frac{\|\bs{k}\|_2}{k^\alpha}
      \|(\bs{x}_s-\bs{e}_s)_{[0,t]}\|_{2,\infty}
      \end{split}
  \end{align}
  hold, where $\alpha=1$ in case \ref{assump:general:a2b} holds. In particular, this implies the existence of $\clsym_2>0$ with
  \begin{multline}\label{eq:xfest:l2:case:1}
    \|\bs x_f(t)\|_2 \le e^{\lambda_{n+1}t}\|\bs x_f(0)\|_2\\ + \frac{\clsym_2}{\sqrt{n^\alpha}} \|\bs{k}\|_2\|(\bs{x}_s-\bs{e}_s)_{[0,t]}\|_{2,\infty}
  \end{multline}
and in case that \ref{assump:general:a2a} holds, additionally
  \begin{multline}\label{eq:xfest:l1:case:1}
    \|\bs x_f(t)\|_1 \le e^{\lambda_{n+1}t}\|\bs x_f(0)\|_1\\ + \clsym_1\zeta(\alpha,n+1)\|\bs{k}\|_2 \|(\bs{x}_s-\bs{e}_s)_{[0,t]}\|_{2,\infty}.
  \end{multline}
  Herein, $\zeta(\alpha,n+1)=\sum_{k=n+1}^{\infty} 1/k^{\alpha}=\sum_{k=1}^{\infty} 1/(k+n)^{\alpha}$ denotes the Hurwitz zeta function with $\alpha>1$ from \ref{assump:general:a2a}. 
\end{lemma}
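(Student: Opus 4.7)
My plan is to establish the pointwise modal bound in \eqref{eq:xkest} first by the variation-of-constants formula applied mode-by-mode, and then aggregate by the triangle inequality in $\ell^2$ and $\ell^1$ to obtain \eqref{eq:xfest:l2:case:1} and \eqref{eq:xfest:l1:case:1}. For $k\ge n+1$, the $k$-th component of \eqref{eq:cl3} together with $u=-\bs{k}^T(\bs{x}_s-\bs{e}_s)$ reads $\dot x_k = \lambda_k x_k - b_k\bs{k}^T(\bs{x}_s-\bs{e}_s)$, so that
\[
x_k(t) = e^{\lambda_k t}x_k(0) - b_k\int_0^t e^{\lambda_k(t-s)}\bs{k}^T(\bs{x}_s(s)-\bs{e}_s(s))\,ds.
\]
Taking absolute values, applying Cauchy--Schwarz inside the integral, pulling the supremum out of the time integrand, and using that $\lambda_k<0$ for $k\ge n+1$ so that $\int_0^t e^{\lambda_k(t-s)}ds = (1-e^{\lambda_k t})/|\lambda_k|\le 1/|\lambda_k|$, one arrives at $|x_k(t)|\le e^{\lambda_k t}|x_k(0)| + (|b_k|/|\lambda_k|)\|\bs{k}\|_2\|(\bs{x}_s-\bs{e}_s)_{[0,t]}\|_{2,\infty}$. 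The monotonicity $\lambda_k\le\lambda_{n+1}$ from \ref{assump:general:a1} then yields the second inequality in \eqref{eq:xkest}.

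The bound $|b_k/\lambda_k|\le d_1/k^{\alpha}$ needed to reach \eqref{eq:xkest} is immediate under \ref{assump:general:a2a}. Under \ref{assump:general:a2b} the relevant bullet is the second one, giving $|b_k|\le \cssym_1 k$ together with $|\lambda_k|\ge \cssym_3 k^2$ and hence $|b_k/\lambda_k|\le (\cssym_1/\cssym_3)/k$, so $\alpha=1$. Setting $\clsym_1$ to the maximum of $d_1$ and $\cssym_1/\cssym_3$, the pointwise estimate holds with the stated $\alpha$. For \eqref{eq:xfest:l2:case:1} I apply the triangle inequality in $\ell^2(\{n+1,n+2,\ldots\})$: the homogeneous part is bounded by $e^{\lambda_{n+1}t}\|\bs{x}_f(0)\|_2$, while for the forced part I use the integral comparison $\sum_{k=n+1}^\infty k^{-2\alpha}\le \int_n^\infty x^{-2\alpha}dx = n^{1-2\alpha}/(2\alpha-1)\le \tilde c/n^{\alpha}$ for $\alpha\ge 1$ (with $\alpha=1$ being the boundary case and the constant absorbing the integration factor); taking the square root supplies the prefactor $\clsym_2/\sqrt{n^\alpha}$. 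For \eqref{eq:xfest:l1:case:1}, applicable only under \ref{assump:general:a2a} where $\alpha>1$, I sum the pointwise bound directly in $\ell^1$: the homogeneous part yields $e^{\lambda_{n+1}t}\|\bs{x}_f(0)\|_1$ and the forced part gives precisely $\clsym_1\zeta(\alpha,n+1)\|\bs{k}\|_2\|(\bs{x}_s-\bs{e}_s)_{[0,t]}\|_{2,\infty}$, the Hurwitz zeta being finite exactly because $\alpha>1$.

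The main obstacle is the subtlety hiding in \ref{assump:general:a2b}: three of its four bullet alternatives (the absolute-convergence condition, the telescoping condition on $\lambda_k^{-1}$, and the partition-sum condition via the sets $S_j$) are summation- or averaging-type rather than pointwise in $k$, so they do \emph{not} immediately extract a termwise estimate $|b_k/\lambda_k|\le c/k$. My plan is to deduce \eqref{eq:xkest} from the pointwise second bullet, which is the one satisfied by the prototype boundary-control example of Remark \ref{rem:assump:prototype} (with $b_k\sim k$, $\lambda_k\sim -k^2$), and to rely on the other bullets only in the downstream estimates where sums rather than individual terms are what matters. A final bookkeeping step verifies that $\clsym_1,\clsym_2$ are independent of $n$ and $\bs{k}$ (noting that $\|\bs{k}\|_2$ is uniformly bounded in $n$ by \ref{assump:general:a0}, and that all the constants in \ref{assump:general:a2a} and \ref{assump:general:a2b} are fixed data of the problem), completing the argument.
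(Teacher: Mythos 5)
Your proof is correct and takes essentially the same route as the paper's: variation of constants plus Cauchy--Schwarz for the termwise bound, the pointwise estimate $|b_k/\lambda_k|\le \clsym_1/k^\alpha$ from \ref{assump:general:a2a} or from the bullet of \ref{assump:general:a2b} with $|b_k|\le\cssym_1 k$, $\lambda_k\le-\cssym_3 k^2$ (which is exactly what the paper invokes via $\clsym_1=\cssym_1/\cssym_3$), and triangle-inequality aggregation in $\ell^2$ and $\ell^1$ using the same comparison $\sum_{k>n}k^{-2\alpha}\le C n^{-\alpha}$. Your side remark that the other bullets of \ref{assump:general:a2b} would not yield a termwise bound is sensible, but the paper handles this the same way you do, so there is no substantive divergence.
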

\begin{proof}
  By the variation of constants formula we obtain the estimate
  \begin{align*}
    |x_k(t)| & \le e^{\lambda_kt} |x_k(0)|\\
             &+ \int_0^t e^{\lambda_k(t-\tau)} |b_k| \|\bs{k}\|_2 \|\bs{x}_s(\tau)-\bs{e}_s(\tau)\|_{2}d\tau \\
             & \le  e^{\lambda_kt} |x_k(0)| + \bigg\vert\frac{b_k}{\lambda_k}\bigg\vert \|\bs{k}\|_2 \|\bs{x}_s-\bs{e}_s\|_{2,\infty}.
  \end{align*}
The inequalities in \eqref{eq:xkest} then follow directly from
\ref{assump:general:a2a} with $\clsym_1=d_1$ or
\ref{assump:general:a2b} with {$\clsym_1=\cssym_1 / \cssym_3$} and the fact that
$\lambda_k\le\lambda_{n+1}$ for $k\ge n+1$ due to
\ref{assump:general:a1}.

The additional inequalities \eqref{eq:xfest:l2:case:1} and \eqref{eq:xfest:l1:case:1} follow by
  taking the $\ell_2$- or $\ell_1$-norm, respectively, 
  of the
  expressions on both sides, making use of the triangle
  inequality. For \eqref{eq:xfest:l2:case:1} we additionally use the inequality
  \[ \sum_{k=n+1}^\infty \frac1{k^{2\alpha}} \le \frac1{n^{2(\alpha-1)}} \sum_{k=n+1}^\infty \frac1{k^2} \le \frac1{n^{2(\alpha-1)}} \frac{\pi^2}{6} {\frac{1}{n} = \frac{\pi^2}{6} \frac{1}{n^{2\alpha-1}}}
  \]
  as well as $n^{2\alpha-1} \ge n^\alpha$ and for \eqref{eq:xfest:l1:case:1} we use the
  definition of the Hurwitz zeta function with
  $\alpha>1$ and $n\geq 1$ if \ref{assump:general:a2a} holds. 
\end{proof}

\subsection{Estimates under Assumptions \ref{assump:general:a1} and \ref{assump:general:a2a}}\label{subsec:A2a}

We now define the quantities
\begin{align} z_k := \cs{k} x_k, \quad k\geq n+1,\label{eq:zVa}\end{align} 
so that
\begin{align}
\label{eq:Cf_times_x:z1}
  y_f = {C\bs{x}_f} = \cv_f\bs{x}_f = \sum_{k=n+1}^{\infty} z_k 
\end{align}
holds provided the infinite sum is absolutely convergent, i.e.,
$\|\bs{z}\|_1<\infty$. The following input-to-output stability (IOS) estimate for $\|\bs{z}\|_1$ ensures this absolute convergence.
\begin{lemma}
  \label{lem:ios_estimate:z:case:1}
  Let \ref{assump:general:a1} and \ref{assump:general:a2a} hold. Then there are constants $\clsym_a^1,\clsym_a^2>0$ such that with
    $\eta_n(t)=\sum_{k=n+1}^{\infty}e^{\lambda_k t}$ for all $t > 0$ we obtain 
    \begin{align*}
      \|\bs{z}(t)\|_1 \leq \clsym_a^1\eta_n(t)\|\bs{x}_f(0)\|_2 +
      \frac{\clsym_a^2}{n^{{1-\alpha}}} \|(\bs{x}_s-\bs{e}_s)_{[0,t]}\|_{2,\infty}. 
    \end{align*}
\end{lemma}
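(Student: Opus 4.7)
The plan is to derive the bound on $\|\bs{z}(t)\|_1$ by invoking the componentwise estimate on $|x_k(t)|$ provided by Lemma \ref{lemma:xest:case:1}, multiplying through by the coefficients $|c_k|$ (which are uniformly bounded by $\cssym_2$ under Assumption \ref{assump:general:a2a}), and then summing over $k\ge n+1$. Concretely, from
\[
|z_k(t)| = |c_k|\,|x_k(t)| \le \cssym_2 e^{\lambda_k t} |x_k(0)| + \cssym_2 \clsym_1 \frac{\|\bs k\|_2}{k^\alpha} \|(\bs x_s - \bs e_s)_{[0,t]}\|_{2,\infty},
\]
summation yields two separate sums that will correspond to the two terms in the desired inequality.

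For the initial-condition term, the point is to replace the vector $|x_k(0)|$ by a scalar bound that pulls out of the exponential sum. Since $|x_k(0)| \le \|\bs x_f(0)\|_\infty \le \|\bs x_f(0)\|_2$ for every $k\ge n+1$, I obtain
\[
\sum_{k=n+1}^{\infty} e^{\lambda_k t} |x_k(0)| \le \|\bs x_f(0)\|_2 \sum_{k=n+1}^{\infty} e^{\lambda_k t} = \eta_n(t)\,\|\bs x_f(0)\|_2,
\]
which is finite for each $t>0$ because $\lambda_k\to-\infty$ by \ref{assump:general:a1}. This is the step where I must resist the temptation to use Cauchy--Schwarz, which would give $\sqrt{\eta_n(2t)}$ rather than $\eta_n(t)$.

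For the input-driven term, the remaining series is $\sum_{k=n+1}^{\infty} k^{-\alpha}$ with $\alpha>1$ from \ref{assump:general:a2a}. By the standard integral test this is bounded by $\int_n^{\infty} s^{-\alpha} ds = n^{1-\alpha}/(\alpha-1)$, which together with the $n$-independent bound on $\|\bs k\|_2$ from \ref{assump:general:a0} gives the second term of the stated inequality after renaming constants. Absolute convergence of the series defining $y_f$ in \eqref{eq:Cf_times_x:z1} follows as a by-product, since both majorizing sums are finite for each $t>0$.

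The only mildly delicate point is the first step: getting the factor $\eta_n(t)$ rather than its square root requires the $\ell_\infty$--$\ell_2$ estimate on the initial data, not Cauchy--Schwarz. Everything else is routine: apply Lemma \ref{lemma:xest:case:1}, use the uniform bound on $|c_k|$, sum termwise, invoke the integral test on the $\alpha$-series, and collect constants into $\clsym_a^1 := \cssym_2$ and $\clsym_a^2 := \cssym_2\clsym_1 \sup_n \|\bs k\|_2/(\alpha-1)$.
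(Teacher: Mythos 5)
Your argument is correct and follows essentially the same route as the paper's proof: apply the componentwise bound of Lemma \ref{lemma:xest:case:1}, multiply by $|c_k|\le\cssym_2$ from \ref{assump:general:a2a}, and sum over $k\ge n+1$, estimating $|x_k(0)|\le\|\bs{x}_f(0)\|_2$ termwise to extract $\eta_n(t)$ (correctly avoiding Cauchy--Schwarz). The only difference is cosmetic: you bound the tail $\sum_{k>n}k^{-\alpha}$ by the integral test, whereas the paper rewrites the Hurwitz zeta function $\zeta(\alpha,n+1)$ and uses a floor-function comparison; both yield an $n$-independent constant times $n^{-(\alpha-1)}$, which is the form actually needed (the exponent $n^{1-\alpha}$ in the displayed statement is a typo for $n^{\alpha-1}$, as the paper's own proof and the choice $\beta=\alpha-1$ in the proof of Theorem \ref{thm:main} confirm, so your bound matches the intended inequality). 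One small imprecision: the finiteness of $\eta_n(t)$ for $t>0$ does not follow from $\lambda_k\to-\infty$ alone (e.g.\ $\lambda_k=-\log\log k$ would not do); the paper derives it from the summability $\sum_k 1/|\lambda_k|=M_\lambda<\infty$ in \ref{assump:general:a1} together with $e^{\lambda_k t}\le q/|\lambda_k|$ for fixed $t>0$, and it is that part of the assumption you should be citing.
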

\begin{proof}
  Taking into account the definition of the Hurwitz zeta function note that
  \begin{align*}
    \clsym_1\zeta(\alpha,n+1)\|\bs{k}\|_2 =
    \frac{1}{n^{{\alpha-1}}}\clsym_1\|\bs{k}\|_2{\frac1n}\sum_{k=1}^{\infty}
    \bigg(\frac{n}{n+k}\bigg)^\alpha. 
  \end{align*}
  The second summand on the right hand side of the inequality then results by summing up the second terms in \eqref{eq:xfest:l1:case:1}. Since
  \begin{align*} \sum_{k=1}^{\infty}
    \bigg(\frac{n}{n+k}\bigg)^\alpha & = \sum_{k=1}^{\infty}
    \bigg(\frac{1}{1+k/n}\bigg)^\alpha \\ & \le \sum_{k=1}^{\infty}
    \bigg(\frac{1}{1+\lfloor k/n \rfloor}\bigg)^\alpha \le n \underbrace{\sum_{k=0}^{\infty}
    \bigg(\frac{1}{1+k}\bigg)^\alpha}_{< \infty \text{ since } \alpha>1},\end{align*}
     defining
  $\clsym_a^2:=\cssym_2\clsym_1\|\bs{k}\|_2\frac1n \sum_{k=1}^{\infty} (n/(n+k))^\alpha <
  \infty$ with $\cssym_2$ from \ref{assump:general:a2a} yields a constant that is 
 independent of $n$.

  To derive the first summand, consider \eqref{lemma:xest:case:1} and note that
  \begin{align*}
    \sum_{k=n+1}^{\infty}e^{\lambda_k t} \vert x_k(0)\vert \leq
    \sum_{k=n+1}^{\infty}e^{\lambda_k t} \|\bs{x}_f(0)\|_2
  \end{align*}
  as $\vert x_k(0)\vert\leq \|\bs{x}_f(0)\|_2$. The sum
  $\sum_{k=n+1}^\infty e^{\lambda_{k}t}$ with
  $\ldots<\lambda_{k+1}<\lambda_k<0$ for $k\geq n+1$ converges
  absolutely to a function $\eta_n(t)$ fulfilling
  $\lim_{t\to 0}\eta_n(t) = \infty$ and
  $\lim_{t\to \infty}\eta_n(t) = 0$ with exponential
  convergence. Let $a_k(t)=e^{\lambda_{k}t}$ for $k\geq n+1$, then
  absolute convergence for $t>0$ becomes apparent as
  $e^{\lambda_kt}\le q/|\lambda_k|$ for suitable $q>0$
  and $1/|\lambda _k|$ is absolutely convergent by \ref{assump:general:a1}.
  For $t=0$ we have
  $e^{\lambda_kt}=1$ for all $k$ 
  so that the series approaches infinity. 
  This yields the first summand on the right hand side of the estimate with $\clsym_a^1 = \cssym_2$ from \ref{assump:general:a2a}.
\end{proof}
\subsection{Estimates under Assumptions \ref{assump:general:a1}
    and \ref{assump:general:a2b}}\label{subsec:A2b}
In this section we derive a counterpart for Lemma \ref{lem:ios_estimate:z:case:1} in case \ref{assump:general:a1} and \ref{assump:general:a2b} are satisfied. The difficulty here is that Lemma \ref{lemma:xest:case:1} does not give us an immediate estimate for the $\ell_1$-norm $\|\bs x_f(t)\|_1$ if \ref{assump:general:a2a} does not hold. In order to circumvent this problem we have to use another definition of $z_j$.
We start with an auxiliary lemma. 

\begin{lemma} Let \ref{assump:general:a1} and \ref{assump:general:a2b} hold. Then there is a constant $\clsym_3>0$ such that
\[ \int_0^t \bigg| \sum_{k\in S_j} \cs{k} b_k e^{\lambda_k \tau} \bigg| d\tau \le \frac{\clsym_3}{j^2}  \]
for all $t\ge 0$ and all $j\in\N$ with $j\ge n/\cssym_4$ with $n$ and $\cssym_4$ from \ref{assump:general:a2b}.
\label{lemma:aux}\end{lemma}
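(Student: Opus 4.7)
The strategy is to exploit the cancellation $\bigl|\sum_{k\in S_j}\cs{k}b_k/\lambda_k\bigr|\le \cssym_5/j^2$ guaranteed by the last bullet of \ref{assump:general:a2b}. Setting $a_k:=\cs{k}b_k/\lambda_k$ so that $\cs{k}b_k e^{\lambda_k\tau}=a_k\lambda_k e^{\lambda_k\tau}$, and picking the reference index $k^*:=\min S_j$, I would write
\[ \sum_{k\in S_j}\cs{k}b_k e^{\lambda_k\tau} = \Bigl(\sum_{k\in S_j}a_k\Bigr)\lambda_{k^*}e^{\lambda_{k^*}\tau} + \sum_{k\in S_j}a_k\bigl(\lambda_k e^{\lambda_k\tau}-\lambda_{k^*}e^{\lambda_{k^*}\tau}\bigr), \]
and estimate the $L^1_\tau([0,t])$-norm of each piece separately via the triangle inequality.

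The main term is handled immediately: since $\lambda_{k^*}<0$, one has $\int_0^t|\lambda_{k^*}|e^{\lambda_{k^*}\tau}d\tau = 1-e^{\lambda_{k^*}t}\le 1$, so the main term contributes at most $\cssym_5/j^2$. For the correction I would further split each summand as
\[ \lambda_k e^{\lambda_k\tau}-\lambda_{k^*}e^{\lambda_{k^*}\tau} = (\lambda_k-\lambda_{k^*})e^{\lambda_k\tau} + \lambda_{k^*}\bigl(e^{\lambda_k\tau}-e^{\lambda_{k^*}\tau}\bigr), \]
integrate both pieces in closed form, and use that $\lambda_k\le\lambda_{k^*}<0$ (so $|\lambda_k-\lambda_{k^*}|=|\lambda_k|-|\lambda_{k^*}|$ and $e^{\lambda_k\tau}\le e^{\lambda_{k^*}\tau}$) to arrive at the bound $2(|\lambda_k|-|\lambda_{k^*}|)/|\lambda_k| = 2|\lambda_{k^*}|\bigl|1/\lambda_{k^*}-1/\lambda_k\bigr|$. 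The third bullet of \ref{assump:general:a2b}, applied with $m=k-k^*\le k_1$, then gives $|1/\lambda_k-1/\lambda_{k^*}|\le \gamma_{k_1}/(k^*)^3$, and together with the second bullet (under which $|\lambda_{k^*}|$ is of the order $(k^*)^2$ and $|a_k|\le \cssym_1\cssym_2/(\cssym_3 k)$ is of order $1/k^*$) each correction summand is bounded by $O(1/(k^*)^2)=O(1/j^2)$. Since $|S_j|\le s$, the full correction is still $O(1/j^2)$, and combining with the main-term bound produces the claimed constant $\clsym_3$, independent of $t$ and --- through $k^*\ge\cssym_4 j\ge n$ --- uniform in $n$.

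The main obstacle is the correction term: the crude estimate $|A-B|\le|A|+|B|$ only yields $\int_0^t|\lambda_k e^{\lambda_k\tau}-\lambda_{k^*}e^{\lambda_{k^*}\tau}|d\tau\le 2$, which combined with $|a_k|=O(1/j)$ and $|S_j|\le s$ gives just $O(1/j)$. To upgrade to $O(1/j^2)$ one has to use the smoothness estimate from the third bullet of \ref{assump:general:a2b}, which expresses that across the short block $S_j=\{k^*,\ldots,k^*+k_1\}$ the eigenvalues $\lambda_k$ vary only by a relative amount $O(1/k^*)=O(1/j)$; pairing this with the $\cssym_5/j^2$ cancellation of the main term is precisely what yields the required bound and justifies why the reference-splitting trick, rather than a direct termwise estimate, is necessary.
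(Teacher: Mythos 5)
Your argument is correct in its overall structure but takes a genuinely different route from the paper's. The paper keeps $h_j(\tau)=\sum_{k\in S_j}\cs{k}b_ke^{\lambda_k\tau}$ intact, invokes Descartes' rule of signs to bound the number of sign changes of $h_j$ by $s$, reduces $\int_0^t|h_j(\tau)|d\tau$ to $s$ times the oscillation of the antiderivative $H_j$, and then bounds $H_j$ at its critical points by using $h_j(t_l)=0$ to eliminate one term from the sum --- which is where the differences $|1/\lambda_k-1/\lambda_{\hat k}|$ and the third bullet of \ref{assump:general:a2b} enter. Your ``freeze the exponential at $k^*=\min S_j$'' decomposition handles the absolute value inside the integral by the triangle inequality and closed-form integration of each piece, avoiding the sign-counting argument entirely. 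Both proofs rest on the same three ingredients (bullets two, three and four of \ref{assump:general:a2b}) and the same mechanism --- the $\cssym_5 j^{-2}$ cancellation of the leading term plus the $O(k^{-3})$ closeness of the $1/\lambda_k$ within a block --- so neither is more general, but yours is more elementary.

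One step in your bookkeeping needs repair. You bound each correction summand by $|a_k|\cdot 2|\lambda_{k^*}|\,|1/\lambda_{k^*}-1/\lambda_k|$ and then assert that ``$|\lambda_{k^*}|$ is of the order $(k^*)^2$''. The second bullet of \ref{assump:general:a2b} gives only the lower bound $|\lambda_k|\ge \cssym_3 k^2$; no upper bound $|\lambda_{k^*}|\le C(k^*)^2$ is available from the assumptions. The fix stays entirely inside your argument: since $\lambda_k\le\lambda_{k^*}<0$ implies $|\lambda_{k^*}|/|\lambda_k|\le 1$, you have $|a_k|\,|\lambda_{k^*}| = |\cs{k}b_k|\,|\lambda_{k^*}|/|\lambda_k|\le |\cs{k}b_k|\le \cssym_1\cssym_2 k$, so each correction summand is at most $2\cssym_1\cssym_2 k\,\gamma\,(k^*)^{-3}\le 2\cssym_1\cssym_2(1+k_1)\gamma\,(k^*)^{-2}$ with $\gamma=\max_{1\le m\le k_1}\gamma_m$, which is the claimed $O(j^{-2})$ via $k^*\ge \cssym_4 j$. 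With that substitution your proof is complete.
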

\begin{proof} Abbreviate $a_k =\cs{k} b_k$ and consider the function 
\[ \tau \mapsto h_j(\tau) := \sum_{k\in S_j} a_k e^{{\lambda_k} \tau}. \]
This function is continuous and, since $S_j$ has $s$ elements, according to Descartes' rule of signs \cite[Theorem 3.1]{Jame06} it has $\sigma\le s$ zeros $t_1, \ldots, t_{\sigma}>0$, which we number in ascending order. This means that 
\[ t\mapsto H_j(t) := \int_{0}^{t} h_j(\tau) d\tau \] 
has at most $\sigma\le s$ local maxima and minima $H_j(t_l)$, $l=1,\ldots,\sigma$ and $h_j$ can change its sign only at the times $t_l$. 
Since $h_j(t)\to 0$ as $t\to\infty$ exponentially fast, the limit $H_j^\infty:=\lim_{t\to\infty} H_j(t)$ exists and is finite. 
Let 
\[ H_j^+:=\max\{0,H_j(t_1),\ldots,H_j(t_\sigma),H_j^\infty\} \] 
and 
\[ H_j^-:=\min\{0,H_j(t_1),\ldots,H_j(t_\sigma),H_j^\infty\}. \]
Then
\[ \left|\int_{s_1}^{s_2} h_j(\tau) d\tau\right| = | H(s_2)-H(s_1) |\le H_j^+ - H_j^-\]
for all $0\le s_1\le s_2$. 

Now we pick an arbitrary $t>0$, let $\sigma_t\le \sigma$ be the largest index with $t_{\sigma_t}<t$ and set $\tau_0:=0$, $\tau_l:=t_l$ for $l=1,\ldots,\sigma_t$ and $\tau_{\sigma_t+1}:= t$. Then, since $h$ can only change sign at the times $t_l$ we get 
\begin{eqnarray*} 
 \int_0^t | h_j(\tau) | d\tau & = & \sum_{l=0}^{\sigma_t} \int_{\tau_l}^{\tau_{l+1}} | h_j(\tau) | d\tau \\
& = & \sum_{l=0}^{\sigma_t} \left| \int_{\tau_l}^{\tau_{l+1}}  h_j(\tau)  d\tau\right| \\
& \le & \sum_{l=0}^{\sigma_t} H_j^+ - H_j^- \; \le \; s(H_j^+ - H_j^-).
\end{eqnarray*}
It thus suffices to prove that there is $\cssym_6>0$ with 
$|H_j(t_l)| \le \cssym_6 j^{-2}$ for all $l=1,\ldots,\sigma$ and $\lim_{t\to\infty}|H_j(t)|\le \cssym_6 j^{-2}$. Then the assertion follows with $\clsym_3=2s\cssym_6$.

To this end, observe that 
\[ H_j(t) = \sum_{k \in S_j} \frac{a_k}{p_k} e^{p_k t} - \frac{a_k}{p_k}.\] 
Hence, 
\[ \lim_{t\to\infty} |H_j(t)| = \bigg|\sum_{k\in S_j} \frac{a_k}{p_k} \bigg| \le \frac{\cssym_5}{j^2}.\]
For the $t_l$ we obtain 
\begin{eqnarray*}
H_j(t_l) & = & \sum_{k \in S_j} \frac{a_k}{p_k} e^{p_k t_l} - \frac{a_k}{p_k}\\
& \le & \bigg|\sum_{k \in S_j} \frac{a_k}{p_k} e^{p_k t_l} \bigg| + \bigg|\sum_{k\in S_j} \frac{a_k}{p_k} \bigg|.
\end{eqnarray*}
The second term is bounded by $\frac{\cssym_5}{j^2}$, as above. For the first term, observe that $h_j(t_l)=0$ implies 
\[ a_{\hat k} e^{p_{\hat k} \tau} = - \sum_{k\in S_j\setminus\{\hat k\}} a_k e^{p_k \tau},\]
where $\hat k$ is an arbitrary element in $S_j$. This yields
\begin{eqnarray*}
\bigg|\sum_{k \in S_j} \frac{a_k}{p_k} e^{p_k t_l} \bigg| 
& = & \bigg|\sum_{k\in S_j\setminus\{\hat k\}}  \left(\frac{a_k}{p_k} - \frac{a_k}{p_{\hat k}}\right) e^{p_k t_l}\bigg|\\
& \le & \sum_{k\in S_j\setminus\{\hat k\}}  a_k \left|\frac{1}{p_k} - \frac{1}{p_{\hat k}}\right|\\
& \le & \sum_{k\in S_j\setminus\{\hat k\}}  \cssym_1k \gamma(k_1) k^{-3}\\
& \le & \sum_{k\in S_j\setminus\{\hat k\}}  \cssym_1\gamma(k_1) k^{-2}\\
& \le & \frac{(s-1)\cssym_1\gamma(k_1)}{\cssym_4^2}\frac1{j^2}.
\end{eqnarray*}
All in all, we obtain the desired bound on $|H_j(t_l)|$ with $\cssym_6 = \frac{(s-1)\cssym_1\gamma(k_1)}{\cssym_4^2} + \cssym_5$.
\end{proof}

Using the notation of Lemma \ref{lemma:aux} we consider the quantities 
\begin{align}
  z_{j} := \sum_{k\in S_j} \cs{k} x_k \label{eq:zVb}
\end{align}
for $j\in\N$. We note that then for $n=\min S_{j_0}$ we obtain
\begin{align}
  \label{eq:Cf_times_x:z}
  y_f = {C\bs{x}_f} =  \cv_f\bs{x}_f = \sum_{j=j_0}^\infty z_{j}
\end{align}
holds, provided the infinite sum is absolutely convergent, i.e., provided $\|\bs{z}\|_1<\infty$ for $\bs{z} = (z_{j_0},z_{j_0+1},\ldots)^T$.

\begin{lemma}
  Let \ref{assump:general:a1} and \ref{assump:general:a2a} hold. Then there is a constant $\clsym_4>0$ independent of $n$, such that
  for all $j_0\in\N$ with $n=j_0\cssym_4-1$ satisfying $\lambda_{n+1}<0$, and all $t\ge 0$ the inequality
  \begin{eqnarray*}
    |z_{j}(t)| & \le & \clsym_4\bigg|\sum_{k\in S_j}e^{\lambda_{k}t} x_{k}(0)\bigg| \\
    && \qquad + \frac{\clsym_4}{j^2} \|\bs{k}\|_2 \|(\bs{x}_s-\bs{e}_s)|_{[0,t]}\|_{2,\infty}
  \end{eqnarray*}
  holds. 
  \label{lemma:zest:new}\end{lemma}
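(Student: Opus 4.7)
The plan is to prove the bound by applying variation of constants to each component of the fast subsystem, then exploiting the cancellation structure encoded in the sets $S_j$ via Lemma \ref{lemma:aux}.

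First, I would write down the explicit solution of the fast component dynamics. For each $k \ge n+1$, equation \eqref{eq:cl3} gives $\dot x_k = \lambda_k x_k + b_k u$ with $u(\tau) = -\cKT(\bs{x}_s(\tau)-\bs{e}_s(\tau))$. Variation of constants yields
\begin{equation*}
x_k(t) = e^{\lambda_k t}x_k(0) + \int_0^t e^{\lambda_k(t-\tau)} b_k u(\tau)\,d\tau.
\end{equation*}
Multiplying by $\cs{k}$ and summing over the finite set $S_j$ (so that the interchange of the finite sum and the integral is immediate) gives the clean splitting
\begin{equation*}
z_j(t) = \sum_{k \in S_j} \cs{k} e^{\lambda_k t} x_k(0) \; + \; \int_0^t \bigg(\sum_{k\in S_j} \cs{k} b_k e^{\lambda_k(t-\tau)} \bigg) u(\tau)\,d\tau
\end{equation*}
into a free-response term and a forced-response term, which I will bound separately.

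For the free-response term, I pull out $|\cs{k}|\le \cssym_2$ (available from the second option of Assumption \ref{assump:general:a2b}, which is invoked jointly with the other items when Lemma \ref{lemma:aux} is used) and apply the triangle inequality, yielding a bound by $\cssym_2 \sum_{k \in S_j} e^{\lambda_k t}|x_k(0)|$; since $|S_j|\le s$ and all summands are nonnegative, this is at most a constant multiple of $\big|\sum_{k\in S_j} e^{\lambda_k t} x_k(0)\big|$ up to the standard estimate that replaces the inner signs by absolute values. For the forced-response term, Cauchy–Schwarz gives $|u(\tau)| \le \|\bs{k}\|_2 \|(\bs{x}_s-\bs{e}_s)|_{[0,t]}\|_{2,\infty}$, and pulling this supremum out of the integral leaves
\begin{equation*}
\|\bs{k}\|_2 \|(\bs{x}_s-\bs{e}_s)|_{[0,t]}\|_{2,\infty} \int_0^t \bigg|\sum_{k\in S_j} \cs{k} b_k e^{\lambda_k(t-\tau)}\bigg|d\tau,
\end{equation*}
and after the change of variables $\sigma = t-\tau$ Lemma \ref{lemma:aux} bounds the integral by $\clsym_3/j^2$. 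The hypothesis $j\ge n/\cssym_4$ in that lemma is satisfied by construction since $n = j_0\cssym_4-1$ and $j\ge j_0$.

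Combining both estimates and choosing $\clsym_4 := \max\{\cssym_2, \clsym_3\}$ produces the claimed inequality. Since $\cssym_2$ arises from Assumption \ref{assump:general:a2b} and $\clsym_3$ from Lemma \ref{lemma:aux}, both are independent of $n$, so $\clsym_4$ is independent of $n$ as required. The main obstacle is bookkeeping around the Lemma \ref{lemma:aux} hypotheses (verifying $j\ge n/\cssym_4$ from the stated relation $n = j_0\cssym_4-1$) and the minor subtlety in matching the exact form of the free-response term as written in the lemma with the bound naturally produced by the triangle inequality; the finiteness $|S_j|\le s$ absorbs this into the constant.
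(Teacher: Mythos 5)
Your proposal is correct and follows essentially the same route as the paper's own proof: variation of constants for each $x_k$, weighting by $\cs{k}$ and summing over the finite set $S_j$, Cauchy--Schwarz on $u=-\cKT(\bs{x}_s-\bs{e}_s)$, the change of variables $\sigma=t-\tau$, and then Lemma \ref{lemma:aux} for the forced-response integral. The one point you flag -- that the triangle-inequality bound $\cssym_2\sum_{k\in S_j}e^{\lambda_k t}|x_k(0)|$ does not literally dominate $\clsym_4\bigl|\sum_{k\in S_j}e^{\lambda_k t}x_k(0)\bigr|$ because of possible sign cancellation -- is a cosmetic mismatch present in the paper's own statement as well, and it is harmless since the only downstream use (Lemma \ref{lemma:znormest}) immediately passes to the sum of absolute values anyway.
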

\begin{proof} 
By the variation of constants formula we obtain the
  estimate
  \begin{align*}
    &|z_{j}(t)| =  \bigg|\sum_{k\in S_j}\cs{k} x_{k}(t)\bigg| \\
    &\le  \bigg|\sum_{k\in S_j}\cs{k} e^{\lambda_{k}t} x_{k}(0)\bigg|\\
    & \quad+ \; \bigg| \int_0^t \sum_{k\in S_j}\cs{k} e^{\lambda_{k}(t-\tau)} b_{k} {\cKT} (\bs{x}_s(\tau)-\bs{e}_s(\tau))\bigg|d\tau \\
    & \le   \bigg|\sum_{k\in S_j}\cs{k} e^{\lambda_{k}t} x_{k}(0)\bigg|\\
    & \quad+ \; \int_0^t \bigg| \sum_{k\in S_j}\cs{k} b_k e^{\lambda_{k}(t-\tau)} \bigg|d\tau  \, \|\bs{k}\|_2 \|(\bs{x}_s-\bs{e}_s))|_{[0,t]}\|_{2,\infty}.
  \end{align*}
  Observing that
  \[ \int_0^t \bigg| \sum_{k\in S_j}\cs{k} b_k
    e^{\lambda_{k}(t-\tau)} \bigg|d\tau = \int_0^t \bigg| \sum_{k\in
      S_j}\cs{k} b_k e^{\lambda_{k}(\tau)} \bigg|d\tau,  \]
  the assertion follows from Lemma \ref{lemma:aux} with $\clsym_4 = \cssym_2$.
\end{proof}

Based on Lemma \ref{lemma:zest:new} we can obtain the counterpart of Lemma \ref{lem:ios_estimate:z:case:1}.

  \begin{lemma}\label{lemma:znormest}
    Let \ref{assump:general:a1} and \ref{assump:general:a2a} hold with $n=j_0\cssym_4-1$, $j_0\in\N$. Then there are constants $\clsym_b^1,\clsym_b^2>0$ such that for 
      $\eta_n(t)=\sum_{j=n+1}^\infty e^{\lambda_{j}t}$ the inequality
      \begin{align*}
        \|\bs{z}(t)\|_1 \le \clsym_b^1\eta_n(t)\|\bs{x}_f(0)\|_2 + \frac{\clsym_b^2}{n}
        \|(\bs{x}_s-\bs{e}_s)|_{[0,t]}\|_{2,\infty}
      \end{align*}
      holds for all $t > 0$.  
  \end{lemma}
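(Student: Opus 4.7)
The plan is to obtain the claim by summing the per-index bound from Lemma \ref{lemma:zest:new} over $j\geq j_0$ and then controlling the two resulting series separately. Applying Lemma \ref{lemma:zest:new} to every $z_j(t)$ and invoking the triangle inequality produces
\[
\|\bs z(t)\|_1 \le \clsym_4\sum_{j=j_0}^{\infty}\Big|\sum_{k\in S_j} e^{\lambda_k t} x_k(0)\Big| + \clsym_4\,\|\bs{k}\|_2\,\|(\bs x_s-\bs e_s)|_{[0,t]}\|_{2,\infty} \sum_{j=j_0}^\infty \frac{1}{j^2}.
\]
From here I would handle the two outer sums in turn.

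For the first sum I would apply the triangle inequality once more, swap the order of summation, and exploit that by \ref{assump:general:a2b} the pairwise disjoint sets $S_j$, $j\geq j_0$, lie inside $\{k:k\geq n+1\}$, since $\min S_{j_0}\geq \cssym_4 j_0 = n+1$. Together with the trivial bound $|x_k(0)|\le \|\bs x_f(0)\|_2$ this gives
\[
\sum_{j\geq j_0}\sum_{k\in S_j} e^{\lambda_k t}|x_k(0)| \le \|\bs x_f(0)\|_2 \sum_{k\geq n+1} e^{\lambda_k t} = \|\bs x_f(0)\|_2\,\eta_n(t),
\]
yielding the first term of the claim with $\clsym_b^1 = \clsym_4$. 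For the second outer sum I would use the elementary bound $\sum_{j=j_0}^\infty j^{-2}\le 1/(j_0-1)$ combined with the relation $j_0 = (n+1)/\cssym_4$; this produces a quantity of the form $C/n$ for all sufficiently large $n$, which together with the uniform-in-$n$ bound on $\|\bs k\|_2$ from \ref{assump:general:a0} delivers the second term of the claim with an $n$-independent constant $\clsym_b^2$.

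The only real obstacle I foresee is bookkeeping rather than analysis: one must make sure that the partition $\{S_j\}_{j\in\N}$ supplied by \ref{assump:general:a2b} can be (or is implicitly) ordered so that $\bigcup_{j\geq j_0}S_j$ actually equals $\{k:k\geq n+1\}$, for otherwise $\sum_{j\geq j_0}z_j(t)$ would not recover the full residual output $y_f = C\bs x_f$ of \eqref{eq:Cf_times_x:z}. Once this ordering convention is fixed, the rest is just the two elementary series manipulations sketched above.
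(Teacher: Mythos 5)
Your proposal is correct and follows essentially the same route as the paper's proof: sum the bound of Lemma \ref{lemma:zest:new} over $j\ge j_0$, estimate the first series by $\eta_n(t)\|\bs{x}_f(0)\|_2$ using disjointness of the $S_j$ and $\min S_j\ge \cssym_4 j_0=n+1$, and bound $\sum_{j\ge j_0}j^{-2}$ by a constant times $1/n$ (the paper uses $\pi^2/(6j_0)$ where you use $1/(j_0-1)$, an immaterial difference). The covering issue you flag only affects the representation \eqref{eq:Cf_times_x:z} of $y_f$, not the stated $\ell_1$-estimate, since for the first series disjointness and $S_j\subset\{k\ge n+1\}$ already give an upper bound.
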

\begin{proof}
The inequality follows by summing the right hand sides of Lemma \ref{lemma:zest:new} over
    $j \ge j_0$. Then the first terms sum up to 
    \begin{align*} \sum_{j=j_0}^\infty \clsym_4\bigg|\sum_{k\in S_j}e^{\lambda_{k}t} x_{k}(0)\bigg| & \le \clsym_4 \underbrace{\sum_{k=n+1}^\infty |e^{\lambda_{k}t}|}_{=\eta_n(t)} \underbrace{|x_{k}(0)|}_{\le \|\bs{x}_f(0)\|_2}\\
    & \le \clsym_b^1\eta_n(t) \|\bs{x}_f(0)\|_2\end{align*}
    with $\clsym_b^1 = \clsym_4$. 
    Using the same inequality as in the proof of Lemma \ref{lemma:xest:case:1}, we can estimate 
    \[ \sum_{j=j_0}^\infty \frac 1{j^2} \le \frac{\pi^2}{6j_0} = \frac{\cssym_4\pi^2}{6(n+1)} \le \frac{\cssym_4\pi^2}{6n},\]%
    and thus the second terms sum up to $\frac{\clsym_b^2}{n}
        \|(\bs{x}_s-\bs{e}_s)|_{[0,t]}\|_{2,\infty}$ if we set 
  \[ \clsym_b^2 := \frac{\cssym_4\pi^2}{6}\|\bs{k}\|_2. \] 
    This shows the claim.
\end{proof}
\subsection{A further estimate under Assumption \ref{assump:general:a1}}\label{subsec:A1}
In order to apply a small-gain argument, we also need an estimate for the slow subsystem. The following lemma yields this result if \ref{assump:general:a1} holds.
\begin{lemma}
  Let \ref{assump:general:a1} hold and let $\tilde{\bs{x}}_s=(\bs{x}_s,\bs{e}_s)$. There exists $\clsym_s^1,\clsym_s^2>0$
  such that the inequality 
  \[
    \|\tilde{\bs{x}}_s(t)\| \le \clsym_s^1 e^{- \tilde\lambda t}\|\tilde{\bs{x}}_s(0)\| + \clsym_s^2\int_0^{t} \|\bs{z}(\tau)\|_1 d\tau 
  \]
  holds for all $t\ge 0$ with $\tilde\lambda>0$ defined after \eqref{eq:c1tilde}.
  \label{lemma:iss_slow}
\end{lemma}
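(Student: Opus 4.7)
The plan is to apply the variation of constants formula to equation \eqref{eq:c1tilde}. This yields
\[
\tilde{\bs{x}}_s(t) = e^{\widetilde A t}\tilde{\bs{x}}_s(0) + \int_0^t e^{\widetilde A(t-\tau)}\tilde{\oL}\, C\bs{x}_f(\tau)\, d\tau,
\]
and after taking norms and invoking the triangle inequality, the homogeneous term is dominated by the exponential estimate $\|e^{\widetilde A t}\|\le \widetilde M e^{-\tilde\lambda t}$ stated just after \eqref{eq:c1tilde}. This immediately produces the first summand with $\clsym_s^1 := \widetilde M$, which is bounded independently of $n$ by Assumption \ref{assump:general:a0}.

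For the convolution term, the key observation is that in the SISO setting $C\bs{x}_f(\tau)$ is a scalar output and, by the representation in \eqref{eq:Cf_times_x:z1} (under \ref{assump:general:a2a}) or \eqref{eq:Cf_times_x:z} (under \ref{assump:general:a2b}), satisfies
\[
|C\bs{x}_f(\tau)| = \bigg|\sum_{k} z_k(\tau)\bigg| \le \|\bs{z}(\tau)\|_1,
\]
where the absolute convergence of this series is precisely the content of Lemmas \ref{lem:ios_estimate:z:case:1} and \ref{lemma:znormest}. Combining the operator bound $\|e^{\widetilde A(t-\tau)}\tilde{\oL}\|\le \widetilde M e^{-\tilde\lambda(t-\tau)}\|\tilde{\oL}\|$ with the elementary estimate $e^{-\tilde\lambda(t-\tau)}\le 1$ for $\tau\in[0,t]$, the convolution integral is bounded by $\widetilde M\|\tilde{\oL}\| \int_0^t \|\bs{z}(\tau)\|_1\, d\tau$. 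Setting $\clsym_s^2 := \widetilde M \|\tilde{\oL}\| = \widetilde M\|\oL\|$, which is bounded independently of $n$ again by \ref{assump:general:a0}, yields the claim.

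There is essentially no deep obstacle here: the argument is a routine combination of variation of constants and the Hurwitz bound on $\widetilde A$. The only point requiring care is that the estimate $|C\bs{x}_f|\le \|\bs{z}\|_1$ tacitly relies on the absolute convergence of the series representing $y_f$, which is exactly what the preceding two subsections have established. Note that one could alternatively retain the factor $e^{-\tilde\lambda(t-\tau)}$ inside the integral to obtain a sharper convolution-type estimate, but since the lemma is stated with the plain integral $\int_0^t\|\bs{z}(\tau)\|_1\, d\tau$, discarding this factor is both sufficient and yields a cleaner form for use in the subsequent small-gain argument.
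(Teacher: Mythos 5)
Your proof is correct and follows essentially the same route as the paper: variation of constants applied to \eqref{eq:c1tilde}, the exponential bound $\|e^{\widetilde A t}\|\le \widetilde M e^{-\tilde\lambda t}$ for the homogeneous part giving $\clsym_s^1=\widetilde M$, and a pointwise bound on the kernel $\|e^{\widetilde A(t-\tau)}\tilde{\oL}\|$ combined with $|C\bs{x}_f(\tau)|\le\|\bs{z}(\tau)\|_1$ for the convolution term. Your explicit remark that the bound $|C\bs{x}_f|\le\|\bs z\|_1$ rests on the absolute convergence established in the preceding subsections matches the paper's implicit appeal to \eqref{eq:Cf_times_x:z1} and \eqref{eq:Cf_times_x:z}.
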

\begin{proof}
  The variation of constants formula applied to \eqref{eq:c1tilde}
  in view of \eqref{eq:Cf_times_x:z1} or \eqref{eq:Cf_times_x:z} and the definition of $\bs{z}$ provide
  \begin{align*}
    \|\tilde{\bs{x}}_s(t)\| &= \left\|e^{\widetilde At} \tilde{\bs{x}}_s(0) + \int_0^t
                       e^{\widetilde A(t-\tau)} \tilde{\oL} \cv_f\bs{x}_f(\tau) d\tau\right\|\\
                     &  \leq \left\| e^{\widetilde At} \tilde{\bs{x}}_s(0)\right\| + \left\|\int_0^t e^{\widetilde A(t-\tau)} \tilde{\oL} \|\bs z(\tau)\|_1 d\tau\right\|.
  \end{align*}%
Note that the norm of $\tilde{\oL}$ satisfies the same $n$-independent bound as the norm of $\oL$.
Setting 
  \[ \clsym_s^2 := \int_0^\infty \|e^{\widetilde A(t-\tau)}\| \|
      \tilde{\oL}\| d\tau < \infty, \]
  the integral term can be estimated via
  \begin{align*}
    &\left\|\int_0^t e^{\widetilde A(t-\tau)} \tilde{\oL} \|\bs
      z(\tau)\|_1 d\tau \right\|\\
    & \le  \int_0^{t} \|e^{\widetilde A(t-\tau)}\| \|
      \tilde{\oL}\| \|\bs z(\tau)\|_1 d\tau \; \le  \;
      \clsym_s^2\int_0^t \|\bs z(\tau)\|_1 d\tau.
  \end{align*}
  Together with the definition of $\tilde\lambda>0$ and $\widetilde M>0$ after \eqref{eq:c1tilde} this yields the claim with $\clsym_s^1 = \widetilde M$.
\end{proof}
\section{Main result}\label{sec:main}
The proof of our following main stability theorem is inspired by
\cite{BLJZ18,JiTP94}, which give stability criteria for
interconnected systems based on input-to-output stability (IOS) and
unbounded observability (UO) properties. There are, however, two
important differences. On the one hand, the proof is significantly
simplified here as we can make use of the linearity of the system. On
the other hand, the term $\eta_n(t)$ on the right hand sides of the estimates in Lemma
\ref{lem:ios_estimate:z:case:1} and \ref{lemma:znormest} tends to $\infty$ as $t\to 0$, which requires
a slightly more involved treatment of this term.
\begin{theorem} \label{thm:main}
  Consider the closed-loop system
  \eqref{eq:cl:all} satisfying \ref{assump:general:a1}
  and \ref{assump:general:a2a} or \ref{assump:general:a2b}. Then, for
  all sufficiently large $n$ there are constants $C>0$ and
  $\lambda>0$ such that the slow subsystem satisfies the estimate
  \begin{align}
    \label{eq:thm:main}
    \|\tilde{\bs x}_s(t)\|_2 \le Ce^{-\lambda t} \|\tilde{\bs x}_s(0)\|_2 +  Ce^{-\lambda t}\|\bs x_f(0)\|_2
  \end{align}
  for all $t\ge 0$ and all initial conditions $\tilde{\bs x}_s(0)\in\mathbb{R}^n$ and $\bs x_f(0)\in \ell^2$. 
  In particular, this implies exponential stability of \eqref{eq:cl:all} in $\ell_2$.
\end{theorem}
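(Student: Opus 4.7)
The plan is a small-gain iteration over time intervals, using as the two ingredients the integrated slow-subsystem bound of Lemma \ref{lemma:iss_slow} and the IOS estimate for $\|\bs z\|_1$ from Lemma \ref{lem:ios_estimate:z:case:1} (under \ref{assump:general:a2a}) or Lemma \ref{lemma:znormest} (under \ref{assump:general:a2b}). First I would sharpen Lemma \ref{lemma:iss_slow} by keeping the exponential factor inside the Duhamel integral, giving $\|\tilde{\bs x}_s(t)\|_2\le\widetilde Me^{-\tilde\lambda t}\|\tilde{\bs x}_s(0)\|_2+\widetilde M\|\tilde{\oL}\|\int_0^t e^{-\tilde\lambda(t-\tau)}\|\bs z(\tau)\|_1\,d\tau$, and then substitute the $\|\bs z\|_1$ bound together with $\|\bs x_s-\bs e_s\|_2\le\sqrt 2\|\tilde{\bs x}_s\|_2$. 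This reduces the analysis to controlling $\int_0^t e^{-\tilde\lambda(t-\tau)}\eta_n(\tau)\,d\tau$, where the subtlety flagged before the theorem ($\eta_n(\tau)\to\infty$ as $\tau\to 0^+$) arises. I would handle it by term-by-term integration: for $n$ large enough that $|\lambda_{n+1}|\ge 2\tilde\lambda$, each summand evaluates to $(e^{-\tilde\lambda t}-e^{\lambda_kt})/|\lambda_k+\tilde\lambda|\le 2e^{-\tilde\lambda t}/|\lambda_k|$, and Assumption \ref{assump:general:a1} makes the series absolutely convergent, yielding clean $e^{-\tilde\lambda t}$ decay with an $n$-independent constant. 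The upshot is an inequality of the form
\[
\|\tilde{\bs x}_s(t)\|_2 \le C_\star e^{-\tilde\lambda t}\bigl(\|\tilde{\bs x}_s(0)\|_2+\|\bs x_f(0)\|_2\bigr) + q_0\int_0^t e^{-\tilde\lambda(t-\tau)}\sup_{s\in[0,\tau]}\|\tilde{\bs x}_s(s)\|_2\,d\tau,
\]
where $C_\star$ is $n$-independent by \ref{assump:general:a0} and $q_0\to 0$ as $n\to\infty$.

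Second, once $n$ is so large that $q:=q_0/\tilde\lambda<1$, I would take $\sup_{s\in[0,t]}$ of both sides. Since the right-hand side is nondecreasing in $t$ and $\int_0^t e^{-\tilde\lambda(t-\tau)}d\tau\le 1/\tilde\lambda$, the small-gain rearrangement yields uniform boundedness $\sup_{s\in[0,t]}\|\tilde{\bs x}_s(s)\|_2\le D_\star(\|\tilde{\bs x}_s(0)\|_2+\|\bs x_f(0)\|_2)$ with $D_\star:=C_\star/(1-q)$ independent of $n$ and $t$.

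The main obstacle is upgrading boundedness to exponential decay: the sup-norm coupling prevents a direct Gronwall loop from producing decay, since the sup attains its maximum at $t=0$. The plan is to iterate the bound on intervals of fixed length $T$, exploiting time-invariance and linearity. Shifting the initial time to $T$ and repeating the argument produces the same estimate with $(\tilde{\bs x}_s(T),\bs x_f(T))$ as data; combining the resulting bound at $t=2T$ with the fast estimate \eqref{eq:xfest:l2:case:1} of Lemma \ref{lemma:xest:case:1} shifted by $T$ gives $\rho(2T)\le r\,\rho(T)$ for $\rho(s):=\|\tilde{\bs x}_s(s)\|_2+\|\bs x_f(s)\|_2$, with contraction factor
\[
r = C_\star e^{-\tilde\lambda T}+qD_\star+e^{\lambda_{n+1}T}+\frac{\clsym_2\|\bs k\|_2 D_\star}{\sqrt{n^\alpha}}.
\]
I would first fix $T$ large enough that $C_\star e^{-\tilde\lambda T}<1/4$ --- legitimate since $C_\star$ and $\tilde\lambda$ are independent of $n$ --- and then take $n$ large enough that each of the remaining three terms is below $1/4$ (using $q\to 0$, $n^{-\alpha/2}\to 0$, and $\lambda_{n+1}\to-\infty$), so that $r<1$. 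Induction yields $\rho(kT)\le r^k\rho(0)$, and interpolating within each interval $[kT,(k+1)T]$ via the boundedness step gives \eqref{eq:thm:main} with $\lambda=-(\ln r)/T$ and $C$ a constant multiple of $D_\star/r$. Finally, the $\ell_2$-stability of the full closed loop follows by substituting the exponential decay of $\bs x_s-\bs e_s$ into the scalar mode equations $\dot x_k=\lambda_kx_k-b_k\cKT(\bs x_s-\bs e_s)$ for $k\ge n+1$, integrating, and summing in $\ell^2$ using the summability $\sum_{k\ge n+1}b_k^2/\lambda_k^2<\infty$ that follows from admissibility of $B$.
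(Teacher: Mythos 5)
Your proposal is correct, and it follows the same overall architecture as the paper's proof: Lemma \ref{lemma:iss_slow} plus the IOS estimates of Lemmas \ref{lem:ios_estimate:z:case:1}/\ref{lemma:znormest}, a sup-norm small-gain rearrangement to get uniform boundedness for $n$ large, and then an iteration over time intervals of fixed length to upgrade boundedness to exponential decay. There are, however, two genuine technical differences worth noting. First, you keep the factor $e^{-\tilde\lambda(t-\tau)}$ inside the Duhamel integral rather than discarding it as the paper does when defining $\clsym_s^2$; combined with your term-by-term bound $\int_0^t e^{-\tilde\lambda(t-\tau)}\eta_n(\tau)\,d\tau\le 2M_\lambda e^{-\tilde\lambda t}$ (valid once $|\lambda_{n+1}|\ge 2\tilde\lambda$, with the interchange of sum and integral justified by monotone convergence exactly as in the paper), this makes the coefficient of $\|\bs x_f(0)\|_2$ in the slow estimate decay like $e^{-\tilde\lambda t}$. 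The paper only obtains the non-decaying coefficient $\clsym_5\|\bs x_f(0)\|_2$ at this stage, which is why it must analyze a genuinely coupled $2\times 2$ system of difference inequalities at the times $t_k=k\tau$ (the off-diagonal coefficient $\clsym_5+1$ is large, and only the product of the off-diagonal terms is small); your sharper bound collapses this to a scalar contraction $\rho((k+1)T)\le r\,\rho(kT)$ for $\rho=\|\tilde{\bs x}_s\|_2+\|\bs x_f\|_2$, with the clean ordering ``first $T$ large, then $n$ large.'' Second, for the decay of $\|\bs x_f(t)\|_2$ you integrate the individual mode equations against the already-established exponential decay of $\bs x_s-\bs e_s$ and sum in $\ell^2$ using $\sum_{k\ge n+1}b_k^2/\lambda_k^2<\infty$, whereas the paper runs a discrete induction on \eqref{eq:xfest:l2:case:1} over the intervals $[t_k,t_{k+1}]$; both work, yours requiring only the minor care of shrinking $\lambda$ below $|\lambda_{n+1}|$ to control the resonance denominator $|\lambda_k+\lambda|$. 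The only cosmetic omissions are the $\sqrt 2$ factors from $\|\bs x_s-\bs e_s\|_2\le\sqrt 2\|\tilde{\bs x}_s\|_2$ in your formula for $r$, which do not affect the argument.
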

\begin{proof}
  Lemma \ref{lemma:iss_slow} yields 
  \[
    \|\tilde{\bs{x}}_s(t)\|_2 \le  \clsym_s^1 e^{- \tilde\lambda
      t}\|\tilde{\bs{x}}_s(0)\|_2 + \clsym_s^2\int_0^{t}
    \|\bs{z}(\tau)\|_1 d\tau. 
  \]
  The $\bs z$-dependent term can be suitably bounded. Using
    Lemma \ref{lem:ios_estimate:z:case:1} in case \ref{assump:general:a2a} holds or Lemma \ref{lemma:znormest} in case \ref{assump:general:a2b} holds we obtain
    \begin{align*}
    &\int_0^{t} \|\bs z(\tau)\|_1 d\tau\\
    &\qquad \leq
      \clsym_*^1\int_0^{t} \eta_n(\tau) d\tau \|\bs{x}_f(0)\|_2 + \frac{\clsym_*^2}{n^{{\beta}}}
    \|(\bs{x}_s-\bs{e}_s)|_{[0,t]}\|_{2,\infty}
  \end{align*}
  with $*=a$ and {$\beta=\alpha-1>0$} if \ref{assump:general:a2a} holds and $*=b$ and {$\beta=1$} if \ref{assump:general:a2b} holds.
  The function $\eta_n(t)=\sum_{j=n+1}^\infty e^{\lambda_{j}t}$ is
  obtained as the limit of the sequence 
  $(\mathfrak{a}_p(t))_{p\geq n+1}$ 
  with $\mathfrak{a}_p(t)=\sum_{j=n+1}^{p}
  e^{\lambda_{j}t}$. Observing that $\mathfrak{a}_p(t)$ is monotone,
  positive and absolutely integrable, Beppo Levi's monotone
  convergence theorem implies that  
  \begin{align*}
    \int_0^{t} \eta_n(\tau) d\tau
    &= \lim_{p\to\infty}\int_0^{t}
    \mathfrak{a}_p(\tau) d\tau  = \lim_{p\to\infty}
      \int_{0}^{t}\sum_{j=n+1}^{p}
      e^{\lambda_j\tau} d\tau\\
    &=
    \sum_{j=n+1}^{\infty}
      \frac{1-e^{\lambda_j{t}}}{\vert\lambda_j\vert}\leq
    \sum_{j=n+1}^{\infty}
    \frac{1}{\vert \lambda_j\vert}\leq M_\lambda,
  \end{align*}
  where the last inequality follows from Assumption
  \ref{assump:general:a1}. 
  As a consequence
  \begin{align*}
    &\int_0^{t} \|\bs z(\tau)\|_1 d\tau\leq
    \clsym_*^1M_\lambda\|\bs{x}_f(0)\|_2 + \frac{\clsym_*^2}{n^{{\beta}}}
      \|(\bs{x}_s-\bs{e}_s)|_{[0,t]}\|_{2,\infty}.
  \end{align*}
Using $\|(\bs x_s - \bs e_s)|_{[0,t]}\|_{2,\infty} \le 
\sqrt 2 \|\tilde{\bs x}_s|_{[0,t]}\|_{2,\infty}$ 
we obtain
\[ \|\tilde{\bs{x}}_s(t)\|_2 \le  \clsym_s^1 e^{- \tilde\lambda
    t}\|\tilde{\bs{x}}_s(0)\|_2 + \clsym_5\| \bs x_f(0)\|_2  +
  \frac{\clsym_6}{n^{{\beta}}}\|\tilde{\bs x}_s|_{[0,t]}\|_{2,\infty}\]
with $\clsym_5=\clsym_s^2M_\lambda \clsym_*^1$ and $\clsym_6=\clsym_s^2\sqrt 2 \clsym_*^2$.
Taking the supremum for $t\in [0,\tau]$ of both sides of this inequality yields
\[ \|\tilde{\bs x}_s|_{[0,\tau]}\|_{2,\infty} \le  \clsym_s^1 \|\tilde{\bs{x}}_s(0)\|_2 + \clsym_5\| \bs x_f(0)\|_2  +  \frac{\clsym_6}{n^{{\beta}}}\|\tilde{\bs x}_s|_{[0,\tau]}\|_{2,\infty}.\]
Choosing $n$ so large that $\clsym_6/n^{{\beta}}\le 1/2$ holds and subtracting the last term on the right hand side implies 
\begin{equation} \|\tilde{\bs x}_s|_{[0,\tau]}\|_{2,\infty} \le  2\clsym_s^1 \|\tilde{\bs{x}}_s(0)\|_2 + 2\clsym_5\| \bs x_f(0)\|_2. 
\label{eq:between}
\end{equation}
Moreover, from \eqref{eq:xfest:l2:case:1} we obtain
\[ \|\bs x_f(t)\|_2 \le e^{\lambda_{n+1}t}\|\bs x_f(0)\|_2 +\frac{\clsym_2}{\sqrt {n^\alpha}}{\|\bs{k}\|_2}  \|\tilde{\bs{x}}_s|_{[0,t]}\|_{2,\infty}.\]
We increase $n$ further, if necessary, such that $2\clsym_s^1\clsym_5/n^{{\beta}} \le
1/(8(\clsym_5+1))$, $2\clsym_5^2/n^{{\beta}} \le 1/(8(\clsym_5+1))$, $2\clsym_s^1\clsym_5/\sqrt{n^{{\beta}}} \le
1/(8(\clsym_5+1))$, and $2\clsym_5\clsym_2{\|\bs{k}\|_2} /\sqrt{n^\alpha} \le 1/(8(\clsym_5+1))$ hold. This
implies
\begin{multline*}
  \|\tilde{\bs{x}}_s(t)\|_2 \le  \clsym_s^1 e^{- \tilde\lambda
    t}\|\tilde{\bs{x}}_s(0)\|_2 + \clsym_5\| \bs x_f(0)\|_2\\
  +  \frac{1}{8(\clsym_5+1)}(\|\tilde{\bs{x}}_s(0)\|_2+\| \bs x_f(0)\|_2)
\end{multline*}
and
\begin{multline*}
  \|\bs x_f(t)\|_2 \le e^{\lambda_{n+1}t}\|\bs x_f(0)\|_2 \\+ \frac1{8(\clsym_5+1)}(\|\tilde{\bs{x}}_s(0)\|_2+\| \bs x_f(0)\|_2) .
\end{multline*}
Choosing $\tau>0$ such that $\clsym_s^1 e^{- \tilde\lambda \tau}\le
1/(8(\clsym_5+1))$ and $e^{\lambda_{n+1}\tau}\le 1/(8(\clsym_5+1))$,
we obtain
\begin{align*}
  \|\tilde{\bs{x}}_s(\tau)\|_2 &\le  \frac1{4(\clsym_5+1)}
                                 \|\tilde{\bs{x}}_s(0)\|_2 +
                                 \left(\clsym_5+1\right)\| \bs
                                 x_f(0)\|_2\\
  \|\bs x_f(\tau)\|_2 & \le \frac{1}{4(\clsym_5+1)}\|\bs x_f(0)\|_2 +
                        \frac{1}{4(\clsym_5+1)}\|\tilde{\bs{x}}_s(0)\|.
\end{align*}
Since the system is time-invariant, for $t_k=k\tau$, $k\in\N_0$, by
the same reasoning we obtain the inequalities
\begin{align*}
  \|\tilde{\bs{x}}_s(t_{k+1})\|_2 &\le  \frac1{4(\clsym_5+1)}
                                    \|\tilde{\bs{x}}_s(t_k)\|_2 +
                                    \left(\clsym_5+1\right)\| \bs
                                    x_f(t_k)\|_2 \\
  \|\bs x_f(t_{k+1})\|_2 &\le \frac{1}{4(\clsym_5+1)}\|\bs
                           x_f(t_k)\|_2 +
                           \frac{1}{4(\clsym_5+1)}\|\tilde{\bs{x}}_s(t_k)\|. 
\end{align*}
Considering equality the two equations can be considered as coupled
difference equations that admit a closed-form solution. Based on this
the following estimates are obtained for all $k\ge 1$
\begin{align*}
  \|\tilde{\bs{x}}_s(t_{k})\|_2 &\le
                                 2\bigg(\frac{3}{4}\bigg)^k(\|\tilde{\bs{x}}_s({t_0})\|_2+(\clsym_5+1)\|\bs
                                  x_f({t_0})\|_2)\\
  \|\bs x_f(t_{k})\|_2 &\le 2\bigg(\frac{3}{4}\bigg)^k(\|\bs x_f({t_0})\|_2+\|\bs x_s({t_0})\|_2).
\end{align*}
Between the times $t_k$, with the same arguments as those leading to \eqref{eq:between} we obtain
\[ \|\tilde{\bs x}_s|_{[t_k,t_{k+1}]}\|_{2,\infty} \le  2\clsym_s^1 \|\tilde{\bs{x}}_s(t_k)\| + 2\clsym_5\| \bs x_f(t_k)\|_2. \]
Together this yields the claimed inequality \eqref{eq:thm:main} with
$\lambda = \log(4/3)$ and
$C=\max\{\clsym_s^1,\clsym_5\}(\clsym_5+1)16/3$. 

To prove exponential stability in $\ell_2$ of the overall
system, it remains to be shown that also $\|\bs x_f(t)\|_2$ has an
upper bound that is linear in the norms of the initial conditions
$\|\bs x_f(0)\|_2$ and $\|\tilde{\bs x}_s(t)\|_2$ and decays
exponentially. This follows by combining the already proved inequality
\eqref{eq:thm:main} with estimate \eqref{eq:xfest:l2:case:1} from
Lemma \ref{lemma:xest:case:1} as follows: We again pick $\tau>0$ and
consider the times $t_k = k \tau$. Applying \eqref{eq:xfest:l2:case:1}
on the interval $[t_k,t_{k+1}]$ yields 
  \begin{align*}
    & \|\bs x_f(t_{k+1})\|_2 \\ & \le e^{\lambda_{n+1}\tau}\|\bs x_f(t_k)\|_2 + \frac{\clsym_2}{\sqrt{n^{{\beta}}}} \|\bs{k}\|_2\|(\bs{x}_s-\bs{e}_s)_{[t_k,t_{k+1}]}\|_{2,\infty}\\
    & \le e^{\lambda_{n+1}\tau}\|\bs x_f(t_k)\|_2 + \frac{2\clsym_2}{\sqrt{n^{{\beta}}}} \|\bs{k}\|_2 \|(\tilde{\bs{x}}_s)_{[t_k,t_{k+1}]}\|_{2,\infty} \\
    & \le e^{\lambda_{n+1}\tau}\|\bs x_f(t_k)\|_2\\ & \quad + \frac{2\clsym_2}{\sqrt{n^{{\beta}}}} \|\bs{k}\|_2 (Ce^{-\lambda t_k} \|\tilde{\bs x}_s(0)\|_2 +  Ce^{-\lambda t_k}\|\bs x_f(0)\|_2)\\
    & = e^{\lambda_{n+1}\tau}\|\bs x_f(t_k)\|_2 + \widetilde C(e^{-\lambda t_k} \|\tilde{\bs x}_s(0)\|_2 +  e^{-\lambda t_k}\|\bs x_f(0)\|_2).
  \end{align*}
  with $\widetilde C = \frac{2\clsym_2}{\sqrt{n^{{\beta}}}} \|\bs{k}\|_2 C$. By induction we obtain
  \begin{align*} 
 &  \|\bs x_f(t_{k+1})\|_2  \le e^{\lambda_{n+1}t_{k+1}}\|\bs x_f(0)\|_2 \\
  & \quad + \widetilde C \sum_{j=0}^k e^{\lambda_{n+1}(t_{k}-t_j)} (e^{-\lambda t_j} \|\tilde{\bs x}_s(0)\|_2 +  e^{-\lambda t_j}\|\bs x_f(0)\|_2)\\
  & \le e^{\lambda_{n+1}t_{k+1}}\|\bs x_f(0)\|_2 \\
  & \quad + (k+1)\widetilde C e^{-\tilde\lambda t_{k}} (\|\tilde{\bs x}_s(0)\|_2 +  \|\bs x_f(0)\|_2)\\
  & \le e^{\lambda_{n+1}t_{k+1}}\|\bs x_f(0)\|_2 \\
  & \quad + \widehat C e^{-\hat\lambda t_{k+1}} (\|\tilde{\bs x}_s(0)\|_2 +  \|\bs x_f(0)\|_2).
  \end{align*}
  Here we used $\tilde\lambda = \min\{\lambda,-\lambda_{n+1}\}$ and the fact that for any $\hat\lambda<\tilde\lambda$ there is $M>0$ with $(k+1)e^{-\tilde\lambda t_k}\le Me^{-\hat\lambda t_{k+1}}$. 
  Between the times $t_k$ we can again use the same arguments as those leading to \eqref{eq:between}, establishing the desired exponential estimate
  \begin{equation} \|\bs x_f(t)\|_2 \le Ce^{-\lambda t}(\|\tilde{\bs x}_s(0)\|_2 +  \|\bs x_f(0)\|_2)\label{eq:xfexp}\end{equation}
  with suitably redefined $C,\lambda>0$ for all $t\ge 0$.
\end{proof}

\begin{remark} The particular value $\lambda = \log(4/3)$ in the proof could be changed to any arbitrary positive constant by adjusting the constants, the controller and the value of  $n$ appropriately. This means that the resulting exponential decay rate $-\tilde\lambda = - \min\{\lambda,-\lambda_{n+1}\}$ can be made as negative as desired. Note, however, that other performance measures such as the phase and gain margin of the closed-loop transfer function cannot be directly determined from our approach.
\end{remark}

\begin{corollary}
  Let the assumptions of Theorem \ref{thm:main} hold true. Then the
  closed-loop system composed of \eqref{eq:dr:abstract} with state
  feedback $u=-\cKT\hat{\bs{x}}_s$ evaluated using the observer
  \eqref{eq:obs} is exponentially stable in $X=(L^2(0,1))^N$.   
\end{corollary}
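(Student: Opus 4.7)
The plan is to translate the $\ell^2$-exponential stability from Theorem \ref{thm:main} (covering $\tilde{\bs{x}}_s$) and the estimate \eqref{eq:xfexp} derived in its proof (covering $\bs{x}_f$) back into exponential stability of the original closed-loop state $(\bs{x},\hat{\bs{x}}_s)\in X\times\R^n$. Since the substantive gain-chain argument has already been carried out, the corollary reduces to a routine norm-equivalence step plus a translation between the original and the modal coordinates.

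First, I would invoke Assumption \ref{assump:abstract_system:op}: the biorthogonal families $(\bs{\phi}_k),(\bs{\psi}_k)$ being Riesz bases yields constants $m_1,m_2>0$ such that
\[ m_1 \|\bs{x}\|_X^2 \le \sum_{k=1}^\infty |\langle \bs{x},\bs{\psi}_k\rangle_X|^2 \le m_2\|\bs{x}\|_X^2 \quad \forall \bs{x}\in X. \]
Splitting the sum at $k=n$ gives $\|\bs{x}\|_X^2 \sim \|\bs{x}_s\|_2^2 + \|\bs{x}_f\|_2^2$. Using $\bs{e}_s = \bs{x}_s - \hat{\bs{x}}_s$ and hence $\|\bs{e}_s(0)\|_2 \le \|\bs{x}_s(0)\|_2 + \|\hat{\bs{x}}_s(0)\|_2$, one obtains a constant $\widetilde C>0$ (depending only on $m_1$) such that
\[ \|\tilde{\bs{x}}_s(0)\|_2 + \|\bs{x}_f(0)\|_2 \le \widetilde C\bigl(\|\bs{x}(0)\|_X + \|\hat{\bs{x}}_s(0)\|_2\bigr). \]

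Next, Theorem \ref{thm:main} combined with \eqref{eq:xfexp} yields constants $C,\lambda>0$ (depending only on $n$) with
\[ \|\tilde{\bs{x}}_s(t)\|_2 + \|\bs{x}_f(t)\|_2 \le Ce^{-\lambda t}\bigl(\|\tilde{\bs{x}}_s(0)\|_2 + \|\bs{x}_f(0)\|_2\bigr). \]
Applying the upper Riesz estimate in the reverse direction to bound $\|\bs{x}(t)\|_X^2 \le m_1^{-1}(\|\bs{x}_s(t)\|_2^2 + \|\bs{x}_f(t)\|_2^2)$, together with $\|\hat{\bs{x}}_s(t)\|_2 \le \|\bs{x}_s(t)\|_2 + \|\bs{e}_s(t)\|_2 \le \sqrt 2 \,\|\tilde{\bs{x}}_s(t)\|_2$, one concludes
\[ \|\bs{x}(t)\|_X + \|\hat{\bs{x}}_s(t)\|_2 \le C' e^{-\lambda t}\bigl(\|\bs{x}(0)\|_X + \|\hat{\bs{x}}_s(0)\|_2\bigr) \]
for a suitable $C'>0$, which is the claimed exponential stability of the closed loop in $X\times \R^n$. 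The only subtlety I anticipate is keeping the Riesz constants $m_1,m_2$ independent of the splitting point $n$, but this is automatic since they characterize the full basis and are inherited from Assumption \ref{assump:abstract_system:op}; beyond this the proof is pure bookkeeping built on top of Theorem \ref{thm:main}.
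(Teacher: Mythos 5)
Your proposal is correct and follows essentially the same route as the paper: both combine the estimate \eqref{eq:thm:main} for $\tilde{\bs x}_s$ with the estimate \eqref{eq:xfexp} for $\bs x_f$ and then use the Riesz-basis relation between $\|\bs x\|_X$ and the $\ell^2$-norm of the modal coefficients to transfer exponential decay to $X$. The only (cosmetic) differences are that the paper exploits the orthonormality in Assumption (A1) to get an exact equality $\|\bs x\|_X = \|(x_k)_k\|_{\ell^2}$ rather than two-sided Riesz constants, and states the final bound in terms of $\|\bs e_s(0)\|_2$ rather than $\|\hat{\bs x}_s(0)\|_2$, which is equivalent via $\bs e_s = \bs x_s - \hat{\bs x}_s$.
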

\begin{proof}
  The result follows by combining the Riesz basis property of the
  operator $A$ and the estimates \eqref{eq:thm:main} and \eqref{eq:xfexp}, which imply the existence of constants  $C>0$ and $\lambda>0$ so that 
  \begin{align*}
    \|\bs{x}(t)\|_X &= \bigg\|\sum_{k=1}^{\infty} x_k(t)\bs{\phi_k}\bigg\|_X=
                      \bigg(\sum_{k=1}^{\infty} (x_k(t))^2\bigg)^{\frac{1}{2}} \\
                    & \leq \|\bs{x}_s(t)\|_2 +
                      \|\bs{x}_f(t)\|_2\\
                    & \le  2Ce^{-\lambda t}(\|\tilde{\bs x}_s(0)\|_2 +  \|\bs x_f(0)\|_2)  \\
                    & \le 2Ce^{-\lambda t}(\|\bs x_s(0)\|_2 + \|\bs e_s(0)\|_2+  \|\bs x_f(0)\|_2)\\
                    & = 2Ce^{-\lambda t}(2\|\bs x(0)\|_X + \|\bs e_s(0)\|_2).
  \end{align*}
  Here in the last step we used the inequalities $\|\bs x_s(0)\|_2 \le
  \|\bs x(0)\|_X$ and $\|\bs x_f(0)\|_2 \le \|\bs x(0)\|_X$. This
  shows the claim.
\end{proof}
\begin{remark}\label{rem:mimo}
  In the MIMO-case, the operators $\bs{b}$ and $\cv$ are replaced by
  operators $\bs{B}$ with $m$ columns and $\bs{C}$ with $l$ rows,
  respectively. This implies that the vectors $\cK$ and $\oL$ become
  matrices $\bs{K}$ and $\bs{L}$ of appropriate dimensions. All
  proofs can be straightforwardly extended to this case if the
  following modifications are made in Assumption
  \ref{assump:general}, the respective lemmas and their proofs. 
  \begin{itemize}
  \item The vector norms $\|\cK\|$ and $\|\oL\|$ are replaced by the corresponding induced matrix norms $\|\bs{K}\|$ and $\|\bs{L}\|$.
  \item The fraction $| {b_k}/{\lambda_k}|$ is replaced
    by ${\|\bs{b}_k\|}/{|\lambda_k|}$, where $\bs{b}_k^T$ is now the $k$-th row of $\bs {B}$.
  \item The modulus $|\cs{k}|$ is replaced by the norm $\|\bs{c}_{k}\|$, where $\bs{c}_{k}$ is the $k$-th column of $\bs{C}$.
  \item The modulus $\vert\sum_{k\in S_j} {\cs{k} b_k}/{\lambda_k}\vert$ is replaced by the induced matrix norm $\|\sum_{k\in S_j} {\bs{c}_{k}\bs{b}^T_k}/{\lambda_k}\|$. Note that $\bs{c}_{k}\bs{b}^T_k$ is an $l\times m$-matrix in the MIMO-case. 
  \end{itemize}
\end{remark}
\section{Numerical computation of $n$}\label{sec:comp_n}
While our approach in principle allows for computing a bound on $n$
for which the inequalities required in the proof of Theorem
\ref{thm:main} hold, this bound will be very conservative. We can,
however, use a numerical approach that leads to a tighter bound: We
fix a second index $m>n$ and split the state of the fast subsystem
into
\[ \bs{x}_{f1}=[x_{n+1},\ldots, x_m]^T \mbox{ and } \bs{x}_{f2}=[x_{m+1},x_{m+2},\ldots]^T.\]
Then the overall closed loop system becomes
\begin{subequations}
  \label{eq:num:all}
  \begin{align}
  \dot{\bs{x}}_s &= \Lambda_s \bs{x}_s - \bs{b}_s \cKT(\bs{x}_s - \bs{e}_s)\label{eq:num1}\\
  \dot{\bs{e}}_s &= (\Lambda_s-\oL \cv_s) \bs{e}_s - \oL \cv_{f1}\bs{x}_{f1} - \oL \cv_{f2}\bs{x}_{f2} \label{eq:num2}\\
  \dot{\bs{x}}_{f1} &= \Lambda_{f1} \bs{x}_{f1} - \bs{b}_{f1} \cKT(\bs{x}_s - \bs{e}_s) \label{eq:num3}\\
  \dot{\bs{x}}_{f2} &= \Lambda_{f2} \bs{x}_{f2} - \bs{b}_{f2} \cKT(\bs{x}_s - \bs{e}_s) \label{eq:num4}.
  \end{align}
\end{subequations}
If we neglect the infinite-dimensional part $\bs x_{f2}$ of the fast dynamics, then we obtain
\begin{subequations}
  \label{eq:numa:all}
  \begin{align}
  \dot{\bs{x}}_s &= \Lambda_s \bs{x}_s - \bs{b}_s \cKT(\bs{x}_s - \bs{e}_s)\label{eq:numa1}\\
  \dot{\bs{e}}_s &= (\Lambda_s-\oL \cv_s) \bs{e}_s - \oL \cv_{f1}\bs{x}_{f1} \label{eq:numa2}\\
  \dot{\bs{x}}_{f1} &= \Lambda_{f1} \bs{x}_{f1} - \bs{b}_f \cKT(\bs{x}_s - \bs{e}_s) \label{eq:numa3}.
  \end{align}
\end{subequations}
Equation \eqref{eq:numa:all} defines a finite-dimensional LTI system, whose stability can be easily checked by
analyzing the eigenvalues of the overall system matrix. The question,
however, is, whether stability of \eqref{eq:numa:all}
implies stability of the true closed-loop system \eqref{eq:num:all}.

In order to see whether this is the case, we define
\[ \bs{x}_{sf} := \left(\begin{array}{l} \bs{x}_s \\
                       \bs{x}_{f1}\end{array}\right),  \;\;
   \bs{b}_{sf} := \left(\begin{array}{l} \bs{b}_s \\
                       \bs{b}_{f1}\end{array}\right),
\]
\[
   \tilde{\bs{k}} := \left(\begin{array}{l} \bs{k} \\
                       \bs{0}\end{array}\right), \;\;
                   \widetilde{\bs{c}}^T := \left(\bs{0}^T \; \cv_{f1}\right), \;\;
   \cv_{sf} := \left(\cv_s \; \cv_{f1}\right),
\]
and $\Lambda_{sf} := \text{diag}\{\Lambda_s,\Lambda_{f1}\}$. 
With these vectors and matrices, \eqref{eq:num:all} can be rewritten as
\begin{subequations}
  \label{eq:numx:all}
  \begin{align}
  \dot{\bs{x}}_{sf} &= \Lambda_{sf} \bs{x}_{sf} - \bs{b}_{sf} \ctKT(\bs{x}_{sf} - \bs{e}_{sf})\label{eq:num1a}\\
  \dot{\bs{e}}_{s} &= (\Lambda_{s}-\oL \cv_{s}) \bs{e}_{s} - \oL \widetilde{\bs{c}}^T\bs{x}_{sf} - \oL \cv_{f2}\bs{x}_{f2} \label{eq:num2a}\\
  \dot{\bs{x}}_{f2} &= \Lambda_{f22} \bs{x}_{f2} - \bs{b}_{f2} \ctKT(\bs{x}_{sf} - \bs{e}_{sf}) \label{eq:num4a}.
  \end{align}
\end{subequations}
while \eqref{eq:numa:all} becomes
\begin{subequations}
  \label{eq:numxa:all}
  \begin{align}
  \dot{\bs{x}}_{sf} &= \Lambda_{sf} \bs{x}_{sf} - \bs{b}_{sf} \ctKT(\bs{x}_{sf} - \bs{e}_{sf})\label{eq:numa1a}\\
  \dot{\bs{e}}_{s} &= (\Lambda_{s}-\oL \cv_{s}) \bs{e}_{s} - \oL \widetilde{\bs{c}}^T\bs{x}_{sf}\label{eq:numa2a}.
  \end{align}
\end{subequations}
Now one sees that \eqref{eq:numx:all} have a similar
structure as \eqref{eq:cl:all}.  Particularly, if \ref{assump:general:a1} and \ref{assump:general:a2a} or \ref{assump:general:a2b} hold, then the
subsystem \eqref{eq:num4a} satisfies all the requirements of the respective lemmata in Sections \ref{subsec:A2ab}--\ref{subsec:A2b} with $m$ in place of $n$. The
only assumption on subsystem \eqref{eq:num1a}--\eqref{eq:num2a} needed
for Theorem \ref{thm:main} is imposed in Lemma
\ref{lemma:iss_slow} in Section \ref{subsec:A1}. There it is assumed that its overall system
matrix, i.e., the matrix $\widetilde A$ in \eqref{eq:c1tilde} is
Hurwitz with $\max {\rm Re}(\lambda_i) < \bar\lambda<0$. Hence, if
this assumption is satisfied, then Theorem \ref{thm:main} can be
applied to \eqref{eq:num1a}--\eqref{eq:num4a} with $m$ in place of
$n$. This means that stability of \eqref{eq:numx:all}
(or, equivalently, of \eqref{eq:numxa:all}) for
sufficiently large $m$ and with
$\max {\rm Re}(\lambda_i) <\bar \lambda<0$ with $\bar\lambda$
independent\footnote{We note that $\bar\lambda$ must be independent of
  $m$ (resp.\ $n$) because the constant $\clsym_s^2$ from Lemma
  \ref{lemma:iss_slow}, which determines the size of the
  ``sufficiently large'' $n$ in the proof of Theorem \ref{thm:main} via the constant $\clsym_6$,
  depends on $\bar\lambda$.} of $m$ implies stability of
\eqref{eq:numx:all} and thus of
\eqref{eq:cl:all}.
This leads to the following numerical test in order to check whether a
certain number $n$ of modes taken into account in the controller is
sufficient for stabilization:
\begin{enumerate}
\item[(i)] Fix $n$ and compute $\rho_m=\max_i {\rm Re}(\lambda_i)$ for the eigenvalues $\lambda_i$ of the matrix
  governing the LTI system \eqref{eq:numa:all} for
  growing numbers of $m$  in order to find $\bar\lambda\in\R$ and $m_0\in\N$ such that $\rho_m\le \bar \lambda$ holds for all $m\ge m_0$.
\item[(ii)] If {$\bar \lambda<0$}, then system \eqref{eq:numx:all} is
exponentially stable for the given $n$.
\end{enumerate}
Clearly, by means of numerical computations it is not possible to {\em rigorously} ensure $\rho_m\le \bar \lambda$ for {\em all} $m\ge m_0$. However, often---as in the examples in the next section---convergence of $\rho_m$ for $m\to\infty$ can be observed numerically, which provides a strong evidence for the desired inequality since $\rho_m$ hardly changes anymore for large $m$.
\section{Simulation results}\label{sec:sim}
In the following the previous analysis and main results are evaluated
for three simulation scenarios covering both scalar and coupled
diffusion-reaction systems. 
\subsection{Scalar diffusion-reaction problem}
Based on the introductory problem formulation in Section
\ref{sec:prototype} boundary and in-domain control as well as sensing,
respectively, are considered to numerically evaluate the formulated
preliminaries and results. 
\subsubsection{Boundary control and {point} sensing}\label{subsec:sim:ex1}
\begin{figure*}[!h]
  \centering
  \subcaptionbox{{Values of $\rho_m$} for $n\in[3,8]$ when varying $m=n+1,\ldots,200$.\label{fig:ex:1:spectral_radius:mvar}}
  {\includegraphics[width=0.32\textwidth]{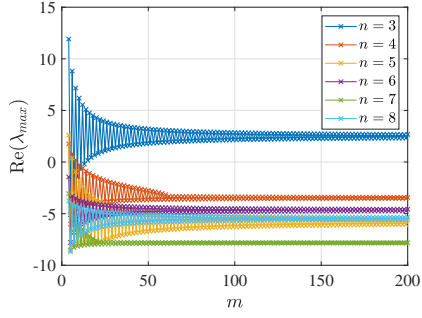}}
  \qquad
  \subcaptionbox{{Values of $\rho_m$} for $m=200$ and variation of the
    dimension $n$ of the slow subsystem.\label{fig:ex:1:spectral_radius:nvar}}
  {\includegraphics[width=0.32\textwidth]{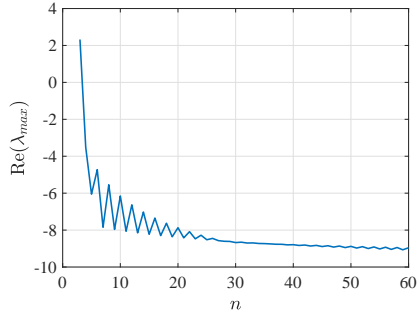}}
  \caption{Results for the scalar example of Section
    \ref{subsec:sim:ex1}.}
  \label{fig:ex:1:spectral_radius}
\end{figure*}
We illustrate the algorithm using the equation
\begin{subequations}
  \label{eq:ex1}
  \begin{align}
    &\partial_t x = \partial_z^2 x + r x,&&z\in(0,1),~t>0\\
    &\partial_z x\vert_{z=0}=0,\quad x\vert_{z=1} = u,&&t>0\\
    &x\vert_{t=0}=x_0,&&z\in[0,1]. 
  \end{align}
\end{subequations}
We set $r=15$, leading to the first three eigenvalues $\lambda_1
\approx 12.5326$, $\lambda_2 \approx  -7.2066$, and $\lambda_3
\approx -46.6850$. As output we use a point measurement at
$\xi=1/4$, leading to the components $c_k=\sqrt{2} \cos(\omega_k/4)$
of the operator $C$ and rendering the actuator/sensor configuration
non-collocated. The stabilizing feedback $\cK$ and the observer 
matrix $\oL$ are designed to shift the open-loop eigenvalues
$\lambda_k$ to the desired eigenvalues $\kappa_1 = -10$, $\kappa_2 =
-11$, and $\kappa_k=\lambda_k$ for $k\in [3,n]$ for the feedback
control and $\nu_1 = -15$, $\nu_2 = -16$, and $\nu_k=\lambda_k$ for
$k\in [3,n]$ for the observer error dynamics. Note that it is
subsequently not aimed at studying closed-loop performance but to
illustrate the main stability result. The verification of the
remaining conditions formulated in Assumption \ref{assump:general} can
be found in Remark \ref{rem:assump:prototype}.  

We have implemented and run the algorithm specified in Section
\ref{sec:comp_n} in MATLAB. Figure \ref{fig:ex:1:spectral_radius:mvar}
shows the resulting {$\rho_m$} when varying $n\in[3,8]$ for
$m=n+1,\ldots,200$. In all examples the numerical evidence strongly suggests that the values $\rho_m$ converge to fixed values and do not change their sign anymore after $m=200$. The results thus indicate that the controller
with observer based on the order $n=3$ is insufficient to stabilize
the system. For $n=4$ and $n=5$ stabilization is achieved for
sufficiently large $m$ and thus for the infinite-dimensional model. It
is furthermore interesting to observe a non-monotonic behavior in $n$
as the values of $\rho_m$ for $n=8$ are larger than the ones for $n=7$ and
even $n=5$ when $m\gg n+1$. This behavior is confirmed in Figure
\ref{fig:ex:1:spectral_radius:nvar}, where the $\rho_m$
computed for $m=200$ are shown depending on the order $n$. Obviously
stabilization is achieved for $n>4$ and a limit is approached that is
larger than the assigned smallest eigenvalue $\kappa_1 = -10$ with
decaying oscillatory behavior.
\subsubsection{In-domain control and sensing}\label{subsec:sim:ex2}
By replacing the boundary control in \eqref{eq:ex1} by an in-domain
control the problem reads
\begin{subequations}
  \label{eq:ex2}
  \begin{align}
    &\partial_t x = \partial_z^2 x + r x +bu,&&z\in(0,1),~t>0\\
    &\partial_z x\vert_{z=0}=0,\quad x\vert_{z=1} = 0,&&t>0\\
    &x\vert_{t=0}=x_0,&&z\in[0,1].      
  \end{align}
\end{subequations}
Given the Heaviside function $\sigma(\cdot)$ let
\begin{multline}
  \label{eq:ex2:rectangular_shape}
  f_{\zeta,\epsilon} =
  \frac{1}{2\epsilon}\big(\sigma(z-\zeta+\epsilon)-\sigma(z-\zeta-\epsilon)\big),\\0<\zeta-\epsilon<\zeta+\epsilon<1
\end{multline}
denote the rectangular shaped pulse centered at $z=\zeta$ with width
$2\epsilon$ and height $1/(2\epsilon)$. Furthermore let the spatial
input characteristics be given by $b=f_{\zeta,\epsilon}$ so that
$b(\cdot)$ approaches a Dirac delta function centered at $z=\zeta$
in the limit as $\epsilon\to 0$, which refers to a pointwise in-domain
control. The output is taken as a pointwise measurement at
position $\xi\in (0,1)$ so that $c_k=\sqrt{2}\cos(\omega_k\xi)$. 
Making use of the eigenvalue and eigenfunction
computations in Section \ref{sec:prototype}
we obtain $b_{1,k}=\langle b,\phi_k\rangle$ in terms of 
\begin{align*}
  b_{1,k} = \frac{2 \sqrt{2} \cos (\frac{\pi}{2} (2k-1)\zeta ) \sin (\frac{\pi}{2} (2 k-1) \epsilon)}{(2k-1) \pi \epsilon},
\end{align*}
which is used to confirm Assumptions \ref{assump:general:a1} and \ref{assump:general:a2a} instead of
$b_k$. Provided that the parameter pair $(\zeta, \epsilon,\xi)$ is
such that the slow subsystem~\eqref{eq:findim:slow} is stabilizable and
detectable, then \ref{assump:general:a0} and \ref{assump:general:a1}
are immediately fulfilled as in the example of Section
\ref{subsec:sim:ex1}. \ref{assump:general:a2a} follows since $\vert
b_{1,k}\vert \leq2 \sqrt{2}/((2k-1) \pi \epsilon)$ so that  
\begin{align*}
  \left\vert\frac{b_{1,k}}{\lambda_k}\right\vert\leq
  \frac{2 \sqrt{2}}{\left(\frac{(2 k-1)^2\pi^2}{4} -r\right) (2
  k-1)\pi \epsilon}\leq \frac{\frac{1}{\sqrt{2}\pi\epsilon}}{k^2}
\end{align*}
for $k\geq n$ with $n$ chosen sufficiently large depending on
$r$. With this, proceed as in Section \ref{subsec:sim:ex1} by determining the
stabilizing feedback $\cK$ and the observer matrix $\oL$ to shift
the open-loop eigenvalues $\lambda_k$ to the desired eigenvalues
$\kappa_1 = -10$, $\kappa_2 = -11$, $\kappa_k=\lambda_k$ for
$k\in[3,n]$ for the feedback control and $\nu_1 = -15$, $\nu_2 =
-16$,  $\nu_k=\lambda_k$ for $k\in[3,n]$ for the observer error
dynamics. By assigning $\zeta=0.7$, $\epsilon=0.05$ and $\xi=0.4$,
the resulting {$\rho_m$} for $n\in[3,8]$ and
$m=n+1\ldots,150$ are shown in Figure \ref{fig:ex:2:spectral_radius:actconfig:1} and
clearly confirms the closed-loop stability assessment. Respective
results are provided in Figure
\ref{fig:ex:2:spectral_radius:actconfig:2} for $\zeta=0.4$ and
Figure \ref{fig:ex:2:spectral_radius:actconfig:3} for
$\zeta=0.1$. The {values of $\rho_m$} computed for $m=150$ depending on the
order $n$ is depicted in Figure \ref{fig:ex:2:spectral_radius:nvar}
for the three actuator/sensor configurations. While closed-loop
stability is achieved in all scenarios only the collocated
configuration with $\zeta=\xi=0.4$ shows an (almost) uniform decay to
the assigned smallest closed-loop eigenvalue $\kappa_1 = -10$.
\begin{figure*}[!t]
  \centering
  \subcaptionbox{$\zeta=0.7$, $\epsilon=0.05$, $\xi=0.4$ \label{fig:ex:2:spectral_radius:actconfig:1}}
  {\includegraphics[width=0.32\textwidth]{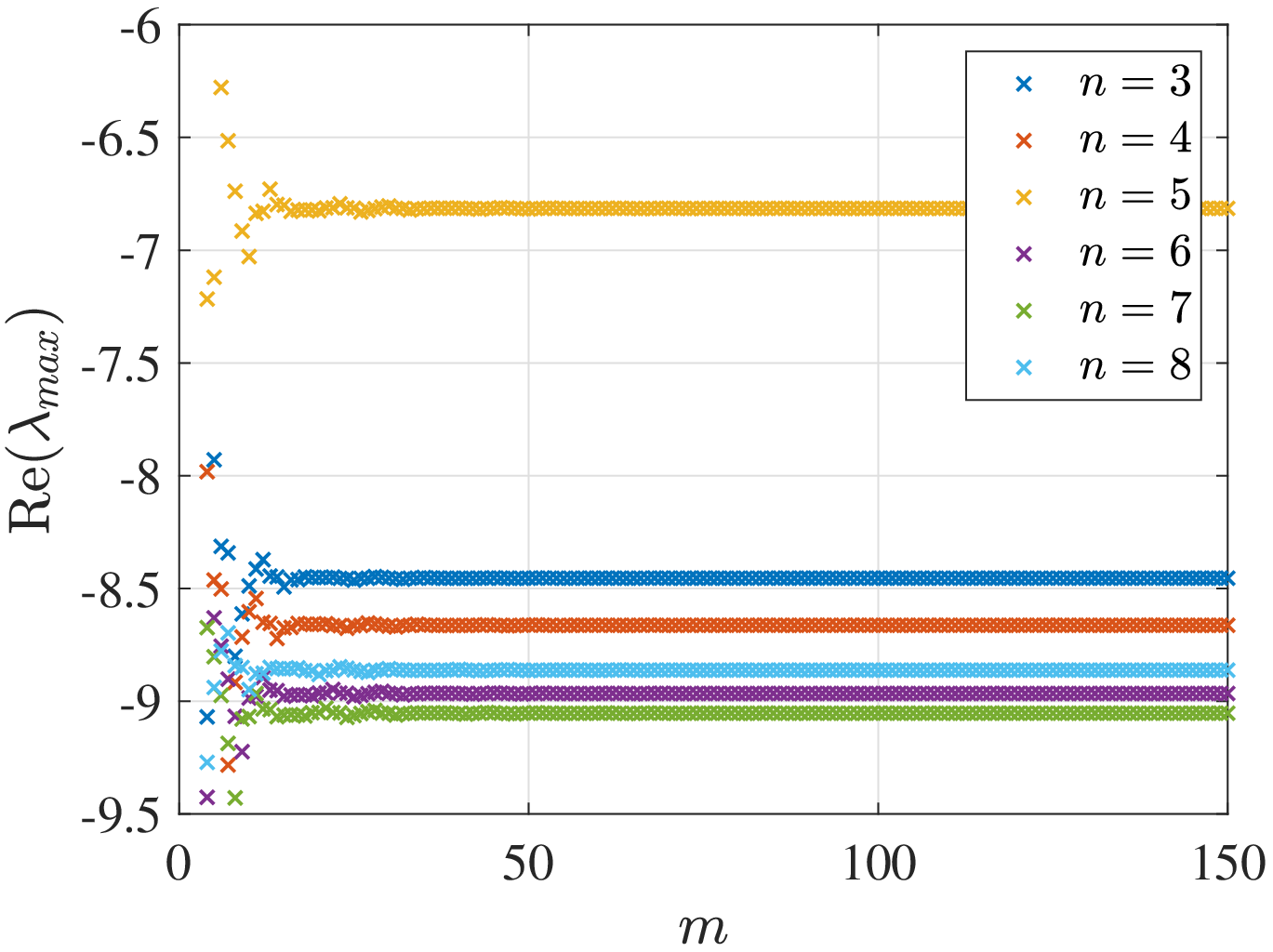}}
  \subcaptionbox{$\zeta=0.4$, $\epsilon=0.05$, $\xi=0.4$ \label{fig:ex:2:spectral_radius:actconfig:2}}
  {\includegraphics[width=0.32\textwidth]{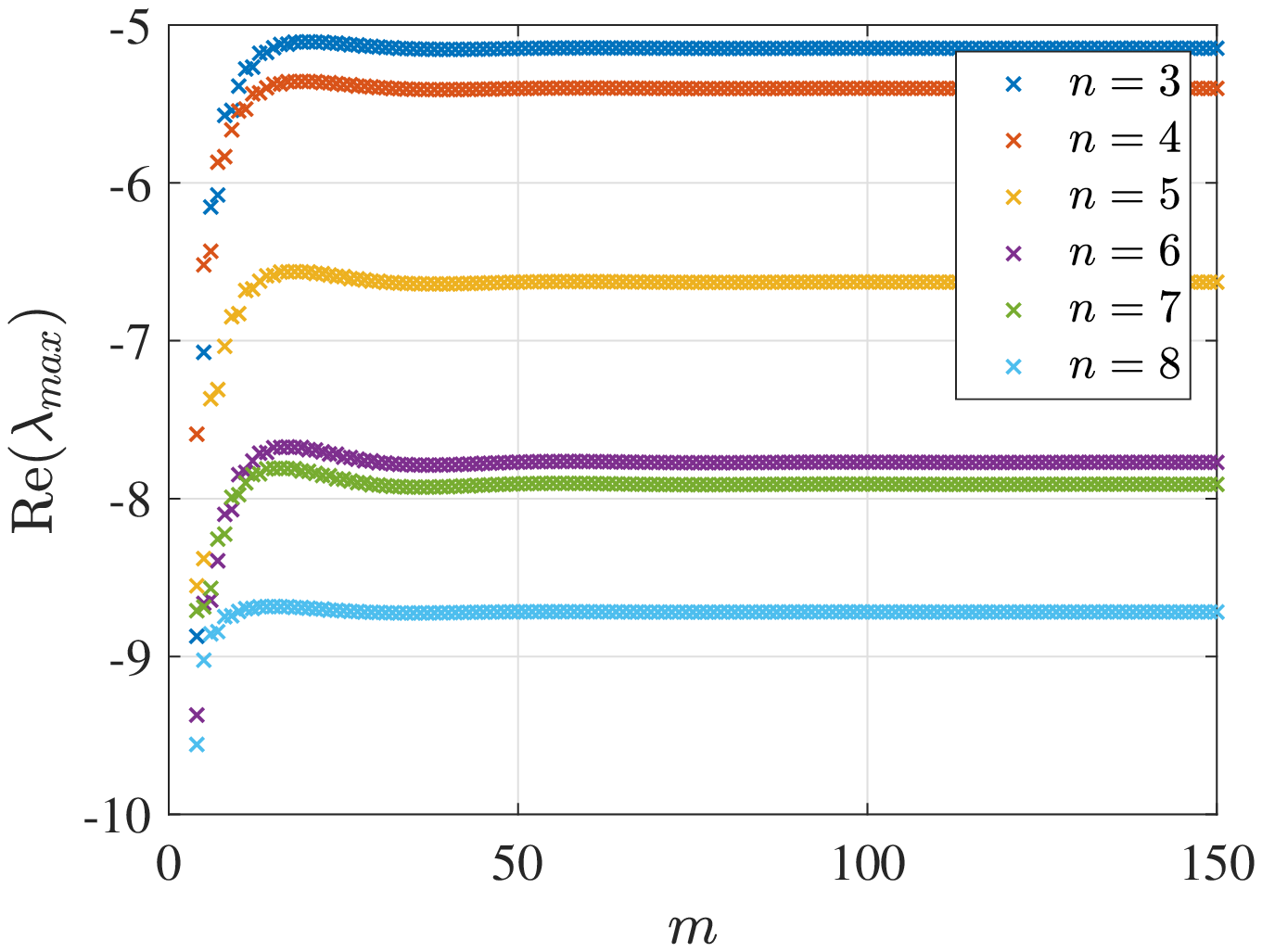}}\\
  \subcaptionbox{$\zeta=0.1$, $\epsilon=0.05$, $\xi=0.4$ \label{fig:ex:2:spectral_radius:actconfig:3}}
  {\includegraphics[width=0.32\textwidth]{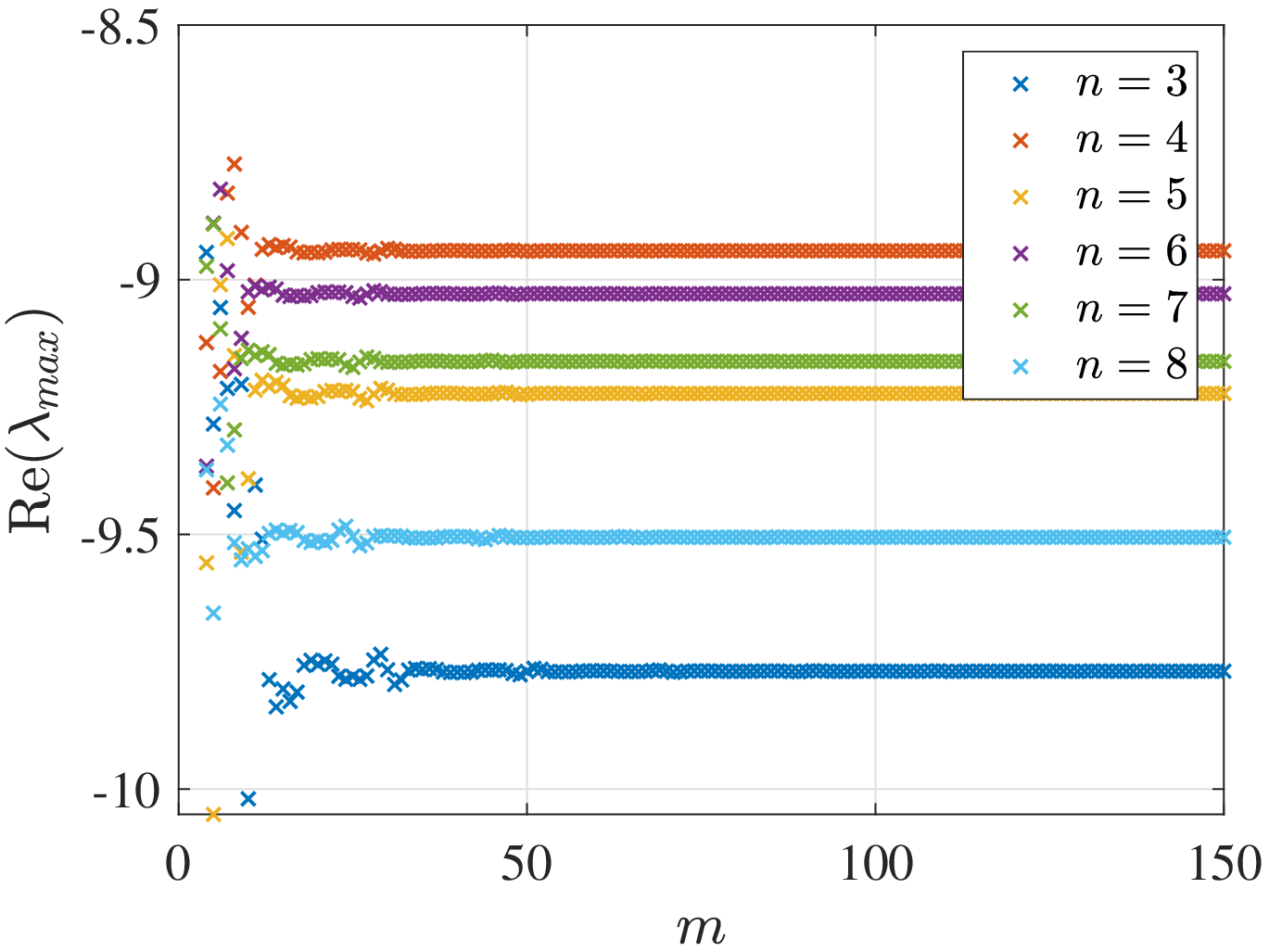}}
  \subcaptionbox{{Values of $\rho_m$} for $m=150$ and variation of the
    dimension $n$ of the slow subsystem for the configurations of
    figures \ref{fig:ex:2:spectral_radius:actconfig:1}-\ref{fig:ex:2:spectral_radius:actconfig:3}.\label{fig:ex:2:spectral_radius:nvar}}
  {\includegraphics[width=0.32\textwidth]{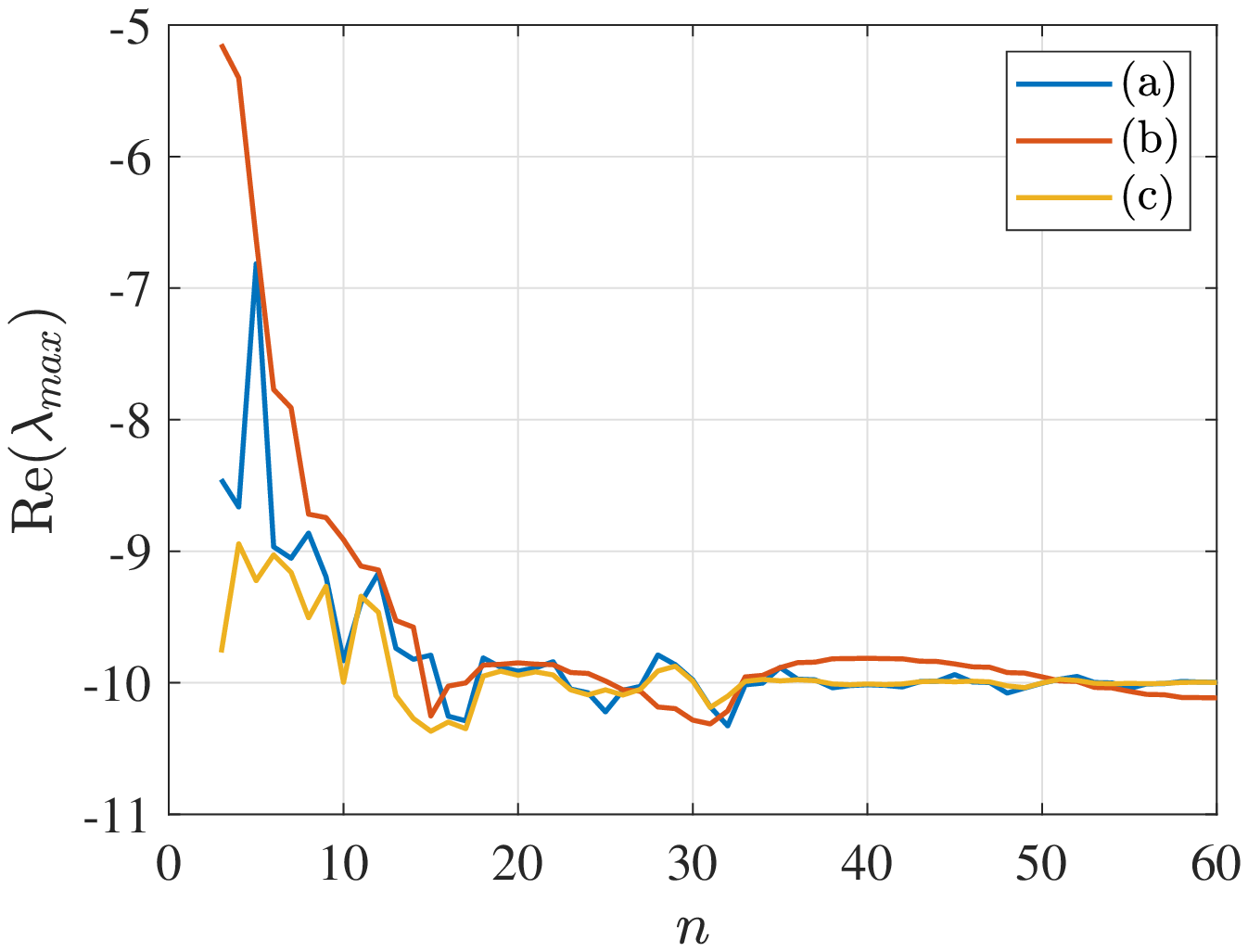}}
  \caption{Results for the scalar example of Section
    \ref{subsec:sim:ex2} with varying location of in-domain control and sensing for $n\in[3,8]$.}
  \label{fig:ex:2:spectral_radius}
\end{figure*}
\subsection{Coupled diffusion-reaction problem}\label{subsec:sim:ex3}
As a second problem consider
\begin{subequations}
  \label{eq:ex3:dr}
  \begin{align}
        &\partial_t \bs{x} = D\partial_z^2 \bs{x} + R\bs{x} + B\bs{u},&&z\in(0,1),~t>0\\
        &G_0\partial_z \bs{x}\vert_{z=0} + F_0 \bs{x}\vert_{z=0} = \bs{0},&&t>0 \\
        &G_1\partial_z \bs{x}\vert_{z=1} + F_1 \bs{x}\vert_{z=1} = \bs{0},&&t>0\\
        &\bs{x}\vert_{t=0}=\bs{x}_0,&&z\in[0,1]
      \end{align}
      for $N=2$ with the diffusion and reaction matrices 
      \begin{align}
        D=
        \begin{pmatrix}
          1 & 0\\0 & 2
        \end{pmatrix},
                     \quad
                     R=
                     \begin{pmatrix}
                       \alpha & r_{12}\\ r_{21} & \alpha
                     \end{pmatrix},
      \end{align}
      and
      \begin{align}
        G_0 = F_1 =\begin{pmatrix}
          1 & 0\\ 0 & 0
        \end{pmatrix},\quad
                      G_1 = F_0 = \begin{pmatrix}
                        0 & 0\\ 0& 1
                      \end{pmatrix}.
      \end{align}
      In-domain control is considered by
      means of
    \begin{align}
      B =
      \begin{pmatrix}
        f_{\zeta_1,\epsilon_1}\\
        f_{\zeta_2,\epsilon_2}
      \end{pmatrix}
    \end{align}
    with $f_{\zeta,\epsilon}$ as defined in
    \eqref{eq:ex2:rectangular_shape}. The output is taken as the
    pointwise measurement of the state, i.e.,
    \begin{align}
      y_1 = x_1\vert_{z=\xi_1},\quad y_2 = x_2\vert_{z=\xi_2}. 
    \end{align}
  \end{subequations}
  Problem can be recast into the abstract form on the state space
    $X=(L_2(0,1))^{2}$ by introducing
    \begin{align}
      A\bs{x} &= D\partial_z^2 \bs{x} + R\bs{x}\label{eq:ex3:opA}\\
      \intertext{with domain}
      D(A) &= \big\{ \bs{x}\in X \vert\, A\bs{x}\in X,\,
             G_0\partial_z \bs{x}\vert_{z=0} + F_0 \bs{x}\vert_{z=0}
             = \bs{0},\nonumber\\
      &\quad \, G_1\partial_z \bs{x}\vert_{z=1} + F_1
        \bs{x}\vert_{z=1} = \bs{0}\big\}\label{eq:ex3:domainA}
    \end{align}
For the verification of Assumption \ref{assump:general} it is
necessary to determine the eigenvalue distribution and the respective
eigenvectors from $A\bs{\phi}_k=\lambda_k\bs{\phi}_k$, $\bs{\phi}_k\in
D(A)$ with $A$ and $D(A)$ defined in \eqref{eq:dr:abstract}.
Contrary to the scalar example this is can no longer be
performed analytically. However, it is possible to compute asymptotic
results as $k\gg 1$. In particular, after some tedious but
straightforward computations the two asymptotic eigenvalue
branches $\lambda_{j,k} = \lambda_{j,k}^{a}+O(k^{-2})$, $j=1,2$, with

\begin{align}
  \label{eq:ex3:eigdist}
  \begin{split}
    {\lamAa} &= \alpha - {\frac{3\mu_k^2 -\sqrt{\mu_k^4+4
          r_{12} r_{21}}}{2}} \\
    {\lamBa} &= \alpha - {\frac{3\mu_k^2 + \sqrt{\mu_k^4+4
          r_{12} r_{21}}}{2}}
  \end{split}
\end{align}%
and $\mu_k=(2k-1)\pi/2$, can be deduced for $k\gg 0$. The corresponding eigenvectors
$\bs{\phi}_{1,k}$, $\bs{\phi}_{2,k}$ of  the operator $A$ and the mutually
orthogonal eigenvectors 
$\bs{\psi}_{1,k}$, $\bs{\psi}_{2,k}$ of the adjoint operator 
$A^\ast\bs{x} = D\partial_z^2\bs{x} + R^{T}\bs{x}$, $D(A^\ast)=D(A)$
can be determined by direct evaluation and scaled so that {$\langle
\bs{\phi}_{i,k},\bs{\psi}_{i,l} \rangle_X=\delta_{k,l}$} for
$i=1,2$. The analysis in \ref{sec:app} allows us to conclude
that the eigenvectors of $A$ generate a Riesz basis. As a consequence $A$ is a Riesz spectral operator 
\cite[{Section 2.3}]{curtain_zwart:95} so that using Fourier series expansion and projection
the system \eqref{eq:ex3:dr} can be re-written as the
infinite-dimensional system of ODEs in diagonal form
\begin{subequations}
  \begin{align}
    \dot{x}_k &= \lambda_k x_k + \bs{b}_k^T\bs{u},&&
                k\in\mathbb{N}\\
    x_k(0) &= \langle \bs{x}_0,\bs{\psi}_k\rangle_X = x_k^0
  \end{align}
\end{subequations}
with $x_k=\langle \bs{x},\bs{\psi}_k\rangle_X$ and $\bs{b}_k^T\bs{u} =
\langle B\bs{u},\bs{\psi}\rangle_X$. 
Herein, $\lambda_k$ is asymptotically determined by
\eqref{eq:ex3:eigdist} for $k$ sufficiently large. In any case for
finite $k$ the eigenvalues can be approximately computed by a suitable
discretization of \eqref{eq:ex3:dr}. 

Assuming that actuator and sensors determined by the parameter pairs
$(\zeta_1,\epsilon_1,\xi_1)$ and $(\zeta_2,\epsilon_2,\xi_2)$ are
chosen so that the slow finite-dimensional subsystem
\eqref{eq:findim:slow} is stabilizable and detectable, then Assumption
\ref{assump:general:a0} is fulfilled. Taking into account the
eigenvalue asymptotics \eqref{eq:ex3:eigdist} yields that
\ref{assump:general:a1} holds true. 
Assumption \ref{assump:general:a2a} follows from the fact that $B$ and thus the $\bs{b}_k$ are bounded and that $|\lambda_k|$ grows quadratically.
\begin{figure*}[!t]
  \centering
  \subcaptionbox{{Values of $\rho_m$} for $n\in[3,8]$ over $m$. \label{fig:ex:3:spectral_radius:1}}
  {\includegraphics[width=0.32\textwidth]{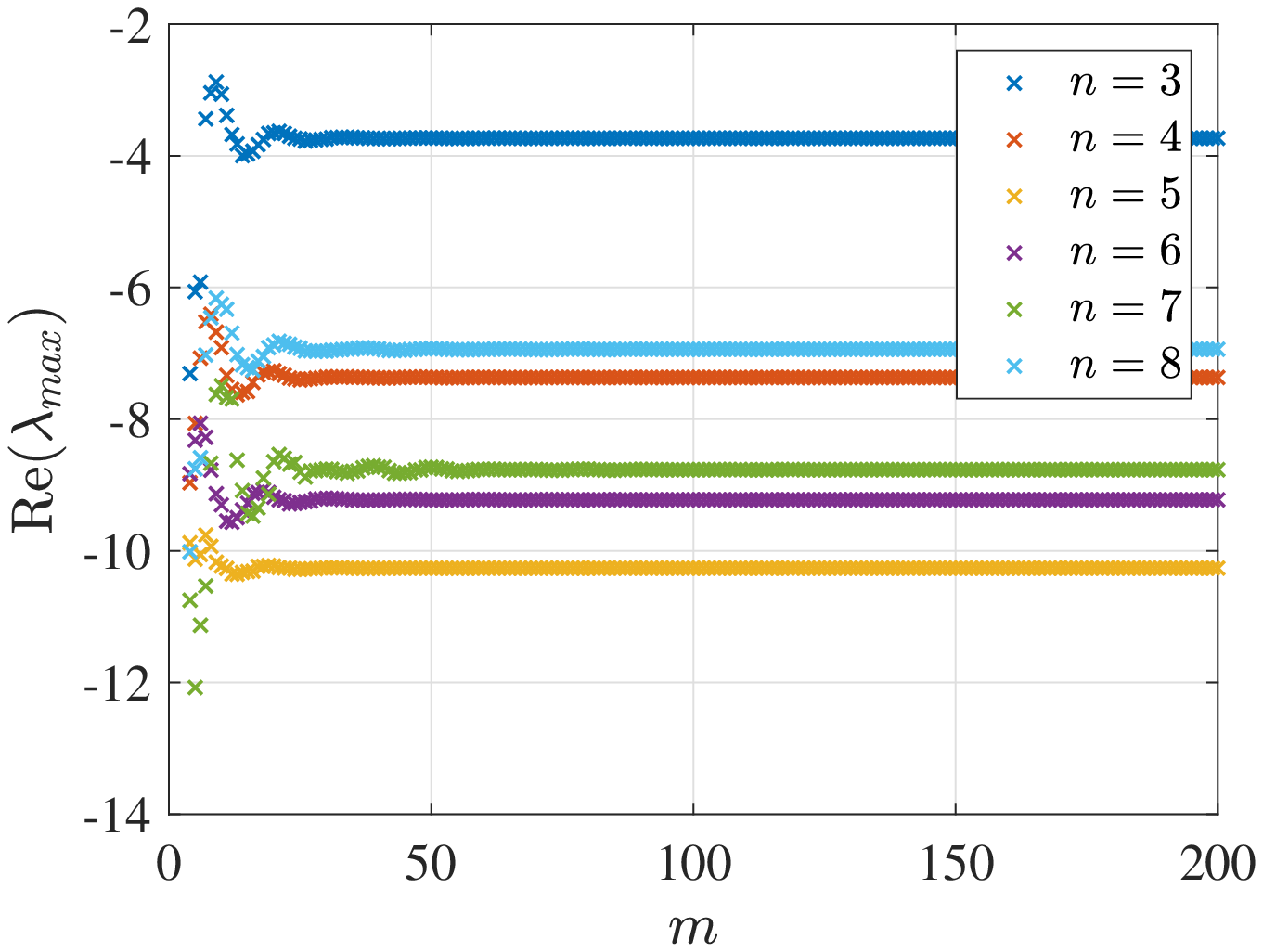}}
  \subcaptionbox{{Values of $\rho_m$} for $m=200$ over $n$. \label{fig:ex:3:spectral_radius:2}}
  {\includegraphics[width=0.32\textwidth]{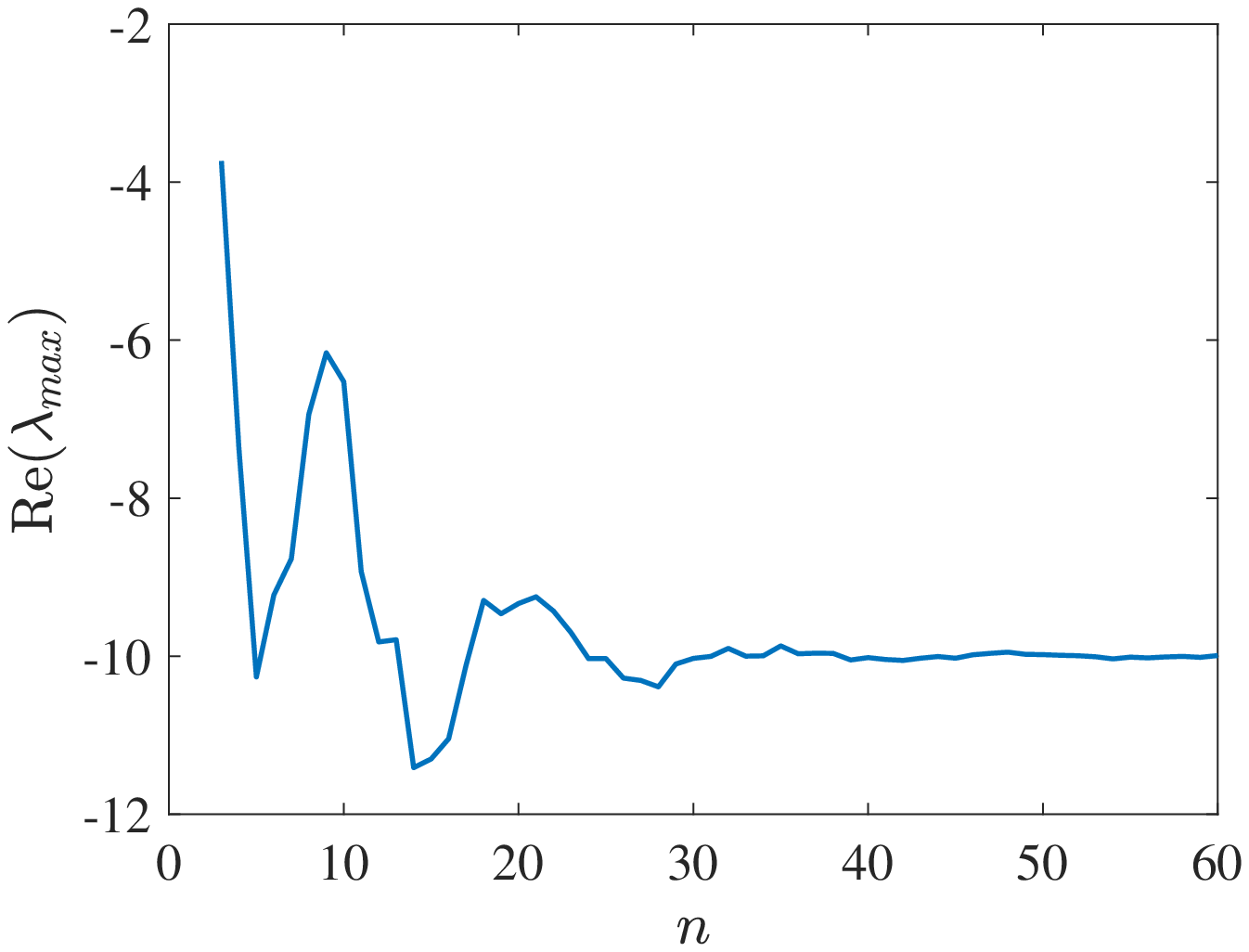}}
  \caption{Results for the coupled example of Section
    \ref{subsec:sim:ex3}.}
  \label{fig:ex:3:spectral_radius}
\end{figure*}

For numerical evaluation we consider $\alpha=10$, $r_{12}=5$,
$r_{21}=10$, which yields the open-loop eigenvalues $\lambda_1=11.56$,
$\lambda_2=2.84$, $\lambda_3=-13.09$, $\lambda_4=-33.79$,
$\lambda_k<\lambda_4$, $k\geq 5$ so that two eigenvalues have to be
shifted to the complex left half-plane by the feedback control. In
particular, in view of the discussed preliminaries, the stabilizing
feedback gain matrix $\bs{K}$ and the observer matrix $\bs{L}$ are determined
to place the two eigenvalues $\lambda_1$, $\lambda_2$ to the desired
eigenvalues $\kappa_1 = -10$, $\kappa_2 = -11$, $\kappa_k=\lambda_k$
for $k\in[3,n]$ for the feedback control and $\nu_1 = -15$, $\nu_2 =
-16$,  $\nu_k=\lambda_k$ for $k\in[3,n]$ for the observer error 
dynamics. Actuators and sensors are parametrized by $\zeta_1=0.3$,
$\epsilon_1=0.05$, $\xi_1=0$ and $\zeta_2=0.6$, $\epsilon_2=0.05$,
$\xi_2=1$ so that the outputs $y_1=x_1\vert_{z=0}$ and
$y_2=x_2\vert_{z=1}$ denote the boundary values of the state
variables. The resulting {values of $\rho_m$} for the dimension of the slow
subsystem restricted to $n\in[3,8]$ and $m=n+1\ldots,200$ residual
modes is shown in Figure \ref{fig:ex:3:spectral_radius:1}. The obtained
results clearly confirm the closed-loop stability assessment. Similar
to the previous examples the {values to which the $\rho_m$ converge} vary with $n$ and may grow as $n$ is increased. To further study this behavior
Figure \ref{fig:ex:3:spectral_radius:2} shows {the $\rho_m$} for
$m=200$ as they change over $n$. Here after some initial variation the
expected behavior becomes visible with the {values} settling to
the assigned smallest closed-loop eigenvalue $\kappa_1 = -10$. 
\begin{remark}
  As indicated before it is noteworthy to mention that the numerical
  results for all examples show that the desired dominating eigenvalue
  assigned during the state feedback control and state observer design
  (here $\kappa_1$) is obtained only for sufficiently large values of
  the order $n$ of the slow subsystem used for design. However, the
  results clearly indicate that closed-loop stability, as assessed in
  previous sections, is given for much lower values of $n$. This is
  an interesting observation that needs further examination. 
\end{remark}
\section{Conclusions}
The closed-loop stability of linear diffusion-reaction systems
under finite-dimensional observer-based state feedback control,i.e., dynamic output feedback control, is addressed based on the
classical decomposition of the considered class of infinite-dimensional
diffusion-reaction systems into a finite-dimensional slow subsystem
and an infinite-dimensional (residual) fast subsystem. State
feedback control and observer design is performed based on the slow
subsystem but remains interconnected to the residual system, which
leads to control and observation spillover. By thoroughly analyzing
the (dynamic) feedback interconnection of the subsystems a small-gain
theorem can be applied to verify closed-loop stability of the
infinite-dimensional system. For practical purposes an approach for
the computation of the required dimension of the slow subsystem used
for controller design is presented together with simulation results
scalar and coupled linear diffusion-reaction systems that confirm
the theoretical assessment.

\appendix
\section{Riesz basis generation for coupled diffusion-reaction problem}\label{sec:app}
To analyze the Riesz basis property of the set of eigenfunctions of
the operator $A$ for problem \eqref{eq:ex3:dr} it is necessary to take
into account the two eigenvalue branches provided in
\eqref{eq:ex3:eigdist} in terms of their asymptotics. These follow
from solving $A\bs{\phi}=\lambda\bs{\phi}$ with $\bs{\phi}\in D(A)$,
which after some tedious computations yields the characteristic
equation
\begin{multline}\label{eq:ex3:chareqn}
  \cos \left(\epsilon_{-}(\bar{\lambda})\right) \cos \left(\epsilon_{+}(\bar{\lambda})\right)
= -4r_{12} r_{21} \times \\
    \frac{(\bar{\lambda} ^2-r_{12} r_{21})+\frac{3}{\sqrt{8}} \bar{\lambda}  \sqrt{ \bar{\lambda} ^2- r_{12} r_{21}} \sin (\epsilon_{-}(\bar{\lambda}))
      \sin (\epsilon_{+}(\bar{\lambda}))}{(\bar{\lambda} ^2-r_{12} r_{21}) (\bar{\lambda} ^2+4
    r_{12} r_{21})}
\end{multline}
with $\bar{\lambda}=\lambda-\alpha$ and
\begin{align*}
  \epsilon_{\pm}(\bar{\lambda}) = \frac{1}{2} \sqrt{-3 \bar{\lambda} \pm\sqrt{\bar{\lambda} ^2+8
    r_{12} r_{21}}}.
\end{align*}
\subsection{Asymptotic analysis of the eigenvalues}
  To deduce \eqref{eq:ex3:eigdist} consider
  $\epsilon_{+}(\lamBar)=\mu\in\mathbb{R}^{+}$ and solve for $\lamBar$, i.e. 
  \begin{align}
    \label{eq:ex3:app:eigvals:branch:1}
    \lamBar = \lambda - \alpha =
    - \frac{3\mu^2}{2} + \frac{\sqrt{\mu^4+4 r_{12} r_{21}}}{2}.
  \end{align}
  This admits to conclude the following relationships
  \begin{enumerate}
  \item[(i)] $\sqrt{\lamBar^2+8r_{12}r_{21}} = 3\lamBar+4\mu^2\geq
    0$
  \item[(ii)] $\epsilon_{-}(\lamBar) =
    \frac{1}{2}\sqrt{-6\lamBar-4\mu^2}$, $-6\lamBar-4\mu^2\geq 0$
  \item[(iii)] $\epsilon_{-}(\lamBar) =
    \frac{1}{2}\sqrt{5\mu^2-3\sqrt{\mu^4+4r_{12}r_{21}}}\geq 0$ as
    $5\mu^2-3\sqrt{\mu^4+4r_{12}r_{21}}\geq 0$. 
  \end{enumerate}
  Using that for $\gamma\gg 1$ and fixed $c\in\R$ the inequality $\sqrt{\gamma^2 + c}= \gamma+O(\gamma^{-1})$ holds, for $\mu\gg 1$ we obtain
  \begin{enumerate}
  \item[(iv)] $\lamBar = -\mu^2 + {O(\mu^{-2})}$
  \item[(v)] $5\mu^2-3\sqrt{\mu^4+4r_{12}r_{21}} = 
    2\mu^2 + {O(\mu^{-2})}$\\[-2ex]
  \item[(vi)] $\lamBar^2-r_{12}r_{21} = \mu^4 + O(1)$
  \item[(vii)] $\lamBar^2+4r_{12}r_{21} = \mu^4 + O(1)$
  \end{enumerate}
  for each fixed (finite) value of $r_{12}r_{21}$. Property (v)
  together with (iii) implies {$\epsilon_{-}(\lamBar) = \mu+O(\mu^{-3})$ and $\epsilon_{+}(\lamBar) = \mu/\sqrt{2}+O(\mu^{-3})$} for $\mu\gg
  1$. Let $f(\lamBar)=\cos(\epsilon_{-}(\lamBar))
  \cos(\epsilon_{+}(\lamBar))$, let $g(\lamBar)$ denote
  the right hand side of \eqref{eq:ex3:chareqn}. In view of properties (iv) to
  (vii) we obtain for $\mu\gg 1$
  \begin{align*}
    f(\lamBar)
    &= \cos(\mu)\cos\bigg(\frac{\mu}{\sqrt{2}}\bigg) +
      {O(\mu^{-3})}\\
    g(\lamBar) &= 
                 {\frac{\mu^4+O(1) + (\mu^4+O(1)) \sin (\epsilon_{-}(\bar{\lambda}))
      \sin (\epsilon_{+}(\bar{\lambda}))}{(\mu^4+O(1))^2}}\\
    &= {O\left(\frac{\mu^4+O(1)}{(\mu^4+O(1))^2}\right) = O(\mu^{-4})}. 
  \end{align*}
  Hence $f(\lamBar)=g(\lamBar)$ implies
  \begin{align}
    \label{eq:app:chareqn:asympt:simp}
    \cos(\mu)\cos\bigg(\frac{\mu}{\sqrt{2}}\bigg) = {O(\mu^{-3})}.
  \end{align}
  A sequence of solutions for this equation is $\mu=\mu_k + O(k^{-3})$ 
  with $\mu_k=(2k-1)\pi/2$, $k\in\mathbb{N}$. The corresponding eigenvalue branch follows from
  the substitution into \eqref{eq:ex3:app:eigvals:branch:1} 
  \begin{multline}\label{eq:ex3:app:evals:branch:1:aympt}
    \lamBar_{1,k} = \lambda_{1,k} - \alpha = - \frac{3\mu^2}{2} +
    \frac{\sqrt{\mu^4+4 r_{12} r_{21}}}{2}\\
    =
    \lambAa + O(k^{-2}) 
  \end{multline}
  with
  \begin{align*}
    \lambAa = \lamAa - \alpha = 
    - \frac{3\mu_k^2}{2} + \frac{\sqrt{\mu_k^4+4 r_{12} r_{21}}}{2},
  \end{align*}
  where the second equality follows from (iv) and $\mu = \mu_k + O(k^{-3})$.
  The second sequence of asymptotic solutions to
  \eqref{eq:app:chareqn:asympt:simp}, $\mu=\bar{\mu}_k+O(k^{-3})$ with $\bar{\mu}_k=\sqrt{2}(2k-1)\pi/2$, corresponds to the analysis of
  the second branch determined from $\epsilon_{+}(\lamBar)=\mu\in\mathbb{R}$ so
  that following a similar argumentation the second eigenvalue branch
  can be determined in the form  
  \begin{multline}\label{eq:ex3:app:evals:branch:2:aympt}
    \lamBar_{2,k} = \lambda_{2,k} - \alpha =
    - \frac{3\mu^2}{2} - \frac{\sqrt{\mu^4+4 r_{12} r_{21}}}{2}\\
    = \lambBa + O(k^{-2})
  \end{multline}
  with
  \begin{align*}
    \lambBa = \lamBa - \alpha = - \frac{3\mu_k^2}{2} -
    \frac{\sqrt{\mu_k^4+4 r_{12} r_{21}}}{2}. 
  \end{align*}
  \subsection{Asymptotic analysis of the eigenvectors}
   Taking into account the two branches
  \eqref{eq:ex3:app:evals:branch:1:aympt} and
  \eqref{eq:ex3:app:evals:branch:2:aympt} the solution of the
  eigenproblem $A\bs{\phi}=\lambda\bs{\phi}$ with $\bs{\phi}\in D(A)$
  can be asymptotically determined. A closed-form general solution,
  which is determined up to a normalization constant, can be computed
  for each of the two eigenvalue branches. The normalization constant
  is obtained by evaluating $\|\bs{\phi}_{j,k}\|^2_X=\langle\bs{\phi}_{j,k},\bs{\psi}_{j,k}
  \rangle_X=1$ for the branches $j\in\{1,2\}$. Here, $\bs{\psi}_{j,k}$ denotes the
  eigenvector for the adjoint operator 
  $A^{*}$, which is given by $A^{*}\bs{x}=D\partial_z^2\bs{x}+R^T\bs{x}$ with
  $D(A^{*})=D(A)$. This implies that $\bs{\psi}_{j,k}$ follows
  from  $\bs{\phi}_{j,k}$ by mutually interchanging $r_{12}$ and
  $r_{21}$.
  
The resulting expressions are rather lengthy and are thus
  subsequently omitted. However, they allow to deduce the following
  asymptotics
  \begin{align}\label{eq:ex3:evecs:asym:general}  
    \bs{\phi}_{j,k} & = \bs{\phi}_{j,k}^{a} + O(k^{-1}),\quad j\in\{1,2\}
  \end{align}
  with
  \begin{align}\label{eq:ex3:evecs:asym:simplified:1}                   
    \bs{\phi}_{1,k}^{a} &=
      \frac{\sqrt{2}}{\sqrt{1+\frac{r_{12}r_{21}}{\mu_k^4}}}
      \begin{bmatrix}
        1\\[2ex]
        \frac{r_{21}}{\mu_k^2}
      \end{bmatrix} \cos(\mu_k z) 
  \end{align}
  for branch 1 with
  \eqref{eq:ex3:app:evals:branch:1:aympt} and
  \begin{align}\label{eq:ex3:evecs:asym:simplified:2}
    \bs{\phi}_{2,k}^{a}
    &=
      \frac{\sqrt{2}}{\sqrt{1+\frac{r_{12}r_{21}}{\mu_k^4}}}
      \begin{bmatrix}
        - \frac{r_{12}}{\mu_k^2}\\[2ex]
        1
      \end{bmatrix}
    \sin(\mu_k z)
  \end{align}
  for branch 2 with
  \eqref{eq:ex3:app:evals:branch:2:aympt}. To illustrate the
  asymptotic behavior, consider branch 1 with
  \eqref{eq:ex3:app:evals:branch:1:aympt},
  \eqref{eq:ex3:evecs:asym:simplified:2}, which yields 
  \begin{align*}
    A\bs{\phi}_{1,k}^{a}-\lamAa\bs{\phi}_{1,k}^{a} =
    \frac{\sqrt{2}\cos(\mu_k
    z)}{\sqrt{1+\frac{r_{12}r_{21}}{\mu_k^4}}}
    \begin{bmatrix}
      O(k^{-2})\\
      O(k^{-4})
    \end{bmatrix}
  \end{align*}
  with boundary conditions 
  \begin{align*}
    &\big(G_0\partial_z \bs{\phi}_{1,k}^{a} + F_0
    \bs{\phi}_{1,k}^{a}\big)\big\vert_{z=0} = \frac{\sqrt{2}r_{21}}{\sqrt{\mu_k^4+r_{12}r_{21}}}
      \begin{bmatrix}
        0\\1
      \end{bmatrix}
    =
    \begin{bmatrix}
        0\\O(k^{-2})
      \end{bmatrix}
\\
    &\big(G_1\partial_z  \bs{\phi}_{1,k}^{a} + F_1
    \bs{\phi}_{1,k}^{a}\big)\big\vert_{z=1} = 
      \frac{\sqrt{2}(-1)^kr_{21}\mu_k}{\sqrt{\mu_k^4+r_{12}r_{21}}}
      \begin{bmatrix}
        0\\1
      \end{bmatrix}
    =
    \begin{bmatrix}
        0\\O(k^{-1})
      \end{bmatrix}
  \end{align*}
  A similar analysis can be performed for branch 2 with
  \eqref{eq:ex3:app:evals:branch:2:aympt},
  \eqref{eq:ex3:evecs:asym:simplified:2}.   

  \subsection{Riesz basis property}
  To analyze that
$\{\bs{\phi}_k^{a}\}_{k\in\mathbb{N}}=\{\bs{\phi}_{1,k}^{a},\bs{\phi}_{2,k}^{a}\}_{k\in\mathbb{N}}$
generates a Riesz basis we make use of Bari's theorem
\cite{young:01,guo:ControlWaveBeam:2019} taking into account
\eqref{eq:ex3:evecs:asym:simplified:1}, \eqref{eq:ex3:evecs:asym:simplified:2} to
show that $\{\bs{\phi}_k^{a}\}_{k\geq  n} = \{\bs{\phi}_{1,k}^{a},\bs{\phi}_{2,k}^{a}\}_{k\geq
  n}$ for sufficiently large $n$ is
quadratically close to a (known) Riesz basis
$\{\bs{e}_k\}_{k\in\mathbb{N}}=\{\bs{e}_{1,k},\bs{e}_{2,k}\}_{k\in\mathbb{N}}$. For
the latter we consider the basis spanned by the eigenvectors of the
decoupled problem, i.e., \eqref{eq:ex3:opA}, \eqref{eq:ex3:domainA}
with matrix $R=0$. This implies 
\begin{align}
  \label{eq:app:known_riesz_basis}
  \bs{e}_{1,k} = \sqrt{2}
  \begin{bmatrix}
    \cos(\mu_k z)\\0
  \end{bmatrix},\quad
  \bs{e}_{2,k} = \sqrt{2}
  \begin{bmatrix}
    0\\
    \sin(\mu_k z)
  \end{bmatrix}
\end{align}
with $\mu_k=(2k-1)\pi/2$, $k\in\mathbb{N}$. As each of the sets 
$\{\sqrt{2}\cos(\mu_k z)\}_{k\in\mathbb{N}}$ and $\{\sqrt{2}\sin(\mu_k
z)\}_{k\in\mathbb{N}}$ generates a Riesz basis for $L_2(0,1)$ we conclude
that $\{\bs{e}_k\}_{k\in\mathbb{N}}$ defines a Riesz basis for
$X=(L_2(0,1))^2$. We remark that $\bs{e}_{1,k}$ and $\bs{e}_{2,k}$
refer to the eigenvalue branches $1$ and $2$ for the decoupled
problem as do $\bs{\phi}_{1,k}^{a}$ and $\bs{\phi}_{2,k}^{a}$ for the
considered coupled problem.

To verify that $\{\bs{\phi}_k^{a}\}_{k\geq  n}$ is quadratically close
to the Riesz basis $\{\bs{e}_k\}_{k\in\mathbb{N}}$ it is necessary to show that 
\begin{align}
  \label{eq:bari}
  \sum_{k\geq n}\| \bs{\phi}_k-\bs{e}_k  \|_X^2 < \infty 
\end{align}
for sufficiently large finite $n\in\mathbb{N}$. Taking into account
\eqref{eq:ex3:evecs:asym:general} provides
\begin{align*}
  \|\bs{\phi}_k-\bs{e}_k  \|_X^2 & = \| \bs{\phi}_{1,k}^a + O(k^{-1})
                                   - \bs{e}_{1,k}  \|_{X}^{2}\\
  &\qquad+
    \| \bs{\phi}_{2,k}^a + O(k^{-1}) - \bs{e}_{2,k}  \|_{X}^{2}
\end{align*}
with
\begin{multline*}
  \| \bs{\phi}_{j,k}^a + O(k^{-1})
  - \bs{e}_{j,k}  \|_{X}^{2}  =
  \| \bs{\phi}_{j,k}^a 
  - \bs{e}_{j,k}  \|_{X}^{2} + O(k^{-2}) \\
  +2 O(k^{-1})\int_{0}^{1} \big\{ (\phi^{a}_{j,k})_1 - (e_{j,k})_1 +
  (\phi^{a}_{j,k})_2 - (e_{j,k})_2 \big\} dz,
\end{multline*}
where $(\phi^{a}_{j,k})_l$, $(e_{j,k})_l$, $l\in\{1,2\}$ refer to the
$l$-th component of the vectors. 
Making use of \eqref{eq:ex3:evecs:asym:simplified:1}, \eqref{eq:ex3:evecs:asym:simplified:2} and
\eqref{eq:app:known_riesz_basis} we obtain
\begin{align*}
  \| \bs{\phi}_{1,k}^{a}-\bs{e}_{1,k}  \|_{X}^2
  &=
    \frac{r_{21}(r_{12}+r_{21})+2\mu_k^4\Big(1-\sqrt{1+\frac{r_{12}r_{21}}{\mu_k^4}}\Big)}{r_{12}r_{21}+\mu_k^4}\\
  &\qquad + O(k^{-2})\\
  \| \bs{\phi}_{2,k}^{a}-\bs{e}_{2,k}  \|_{X}^2
  &=
    \frac{r_{12}(r_{12}+r_{21})+2\mu_k^4\Big(1-\sqrt{1+\frac{r_{12}r_{21}}{\mu_k^4}}\Big)}{r_{12}r_{21}+\mu_k^4}
  \\ &\qquad + O(k^{-2})
\end{align*}
and
\begin{multline*}
  \int_{0}^{1} \big\{ (\phi^{a}_{j,k})_1 - (e_{j,k})_1 +
  (\phi^{a}_{j,k})_2 - (e_{j,k})_2 \big\} dz = \\
  \underbrace{\frac{\sqrt{2}(1-(-1)^{k})}{\mu_k}}_{=(\star)}\underbrace{\Bigg(\frac{1}{\sqrt{1+\frac{r_{12}r_{21}}{\mu_k^4}}}-1\Bigg)}_{=(\star\star)}
  + \underbrace{\frac{r_{21}-r_{12}}{\mu_k\sqrt{\mu_k^4+r_{12}r_{21}}}}_{=(\star\star\star)}.
\end{multline*}
As $(\star)=O(k^{-1})$ and $(\star\star)=o(1)$ the product fulfills
$(\star)(\star\star)=o(k^{-1})$. Since $(\star\star\star)=O(k^{-3})$
we obtain
\begin{multline*}
  2 O(k^{-1})\int_{0}^{1} \big\{ (\phi^{a}_{j,k})_1 - (e_{j,k})_1 +
  (\phi^{a}_{j,k})_2 - (e_{j,k})_2 \big\} dz \\
   = O(k^{-1}){(o(k^{-1}) + O(k^{-3}))} = o(k^{-2}) + O(k^{-4}). 
\end{multline*}
Moreover observing $1-\sqrt{1+{r_{12}r_{21}}/{\mu_k^4}}  = O(\mu_k^{-4})$ for
$k\gg 1$, it follows that
\begin{align*}
  \| \bs{\phi}_{1,k}^{a}-\bs{e}_{1,k}  \|_{X}^2
  &=
    {\frac{r_{21}(r_{12}+r_{21})+ 2 \mu_k^4O(\mu_k^{-4})}{r_{12}r_{21}+\mu_k^4}
  = O(k^{-4})}\\
  \| \bs{\phi}_{2,k}^{a}-\bs{e}_{2,k}  \|_{X}^2
  &=
    {\frac{r_{12}(r_{12}+r_{21})+ 2 \mu_k^4O(\mu_k^{-4})}{r_{12}r_{21}+\mu_k^4} = O(k^{-4})}. 
\end{align*}
Thus, $\| \bs{\phi}_k^{a}-\bs{e}_k \|_X^2=O(k^{-2})$ and consequently $\sum_{k\geq n}\| \bs{\phi}_k^{a}-\bs{e}_k \|_X^2$ converges and
\eqref{eq:bari} is fulfilled, so that
$\{\bs{\phi}_k\}_{k\in\mathbb{N}}=\{\bs{\phi}_{1,k}^{a},\bs{\phi}_{2,k}^{a}\}_{k\in\mathbb{N}}$
generates a Riesz basis.

\bibliographystyle{plain}
\bibliography{refs}

\end{document}